\journal{}
\newtheorem{theorem}{Theorem}
\newtheorem{definition}{Definition}
\newenvironment{proof}[1][Proof]
\begin{document}
\begin{frontmatter}
\title{Branch-and-price-and-cut for the Split-collection Vehicle Routing Problem with Time Windows and Linear Weight-related Cost}

\author[cityu]{Zhixing Luo}
\ead{luozx.hkphd@gmail.com}

\author[hust]{Hu Qin\corref{corl}}
\ead{tigerqin@mail.hust.edu.cn, tigerqin1980@gmail.com}

\author[cityu]{Wenbin Zhu}
\ead{i@zhuwb.com}

\author[cityu]{Andrew Lim}
\ead{lim.andrew@cityu.edu.hk}

\address[cityu]{
Department of Management Sciences,
City University of Hong Kong,\\
Tat Chee Ave., Kowloon Tong, Kowloon, Hong Kong
}

\address[hust]{
School of Management, Huazhong University of Science and Technology,\\
No. 1037, Luoyu Road, Wuhan, China
}

\cortext[corl]{Corresponding author at: School of Management, Huazhong University of Science and Technology, No. 1037, Luoyu Road, Wuhan, China.
Tel.: +852 64117909, +86 13349921096; fax: +86 27 87556437.}

\begin{abstract}
This paper addresses a new vehicle routing problem that simultaneously involves time windows, split collection and linear weight-related cost, which is a generalization of the split delivery vehicle routing problem with time windows (SDVRPTW). This problem consists of determining least-cost vehicle routes to serve a set of customers while respecting the restrictions of vehicle capacity and time windows. The travel cost per unit distance is a linear function of the vehicle weight and the customer demand can be fulfilled by multiple vehicles. To solve this problem, we propose a exact branch-and-price-and-cut algorithm, where the pricing subproblem is a resource-constrained elementary least-cost path problem. We first prove that at least an optimal solution to the pricing subproblem is associated with an extreme collection pattern, and then design a tailored and novel label-setting algorithm to solve it. Computational results show that our proposed algorithm can handle both the SDVRPTW and our problem effectively.
\end{abstract}

\begin{keyword}
vehicle routing; time windows; split collection; weight-related cost; branch-and-price-and-cut
\end{keyword}

\end{frontmatter}

\section{Introduction}
\label{sec:in}
The vehicle routing problem (VRP) and its variants have been extensively studied in literature \citep{Toth02}. These problems consist of designing a set of least-cost routes that fulfill all customer demands and respect a group of operational constraints, such as vehicle capacity, route duration and time windows. The vast majority of the existing vehicle routing models assume that the cost of traversing a route equals the length of that route; a common objective for such models is to minimize the total traveling distance. However, in most practical logistical problems, the real transportation cost depends on many other factors apart from traveling distance, such as vehicle weight, vehicle speed, road conditions and fuel price. Consequently, the distance-minimization vehicle routing models cannot be directly applied by the industrial practitioners that wish to minimize their total transportation costs.

In this study, we only take the effect of vehicle weight and traveling distance on transportation cost into consideration and assume other factors are unchanged. As a result, the transportation cost can be calculated by $d\times f(w)$, where $d$ is the traveling distance and $f(w)$ is a function representing the cost per unit distance paid by the vehicle with weight $w$. Ignoring the effect of vehicle weight is equivalent to setting $f(w)$ to be a constant for any $w >0$. This weight-related cost might have a great impact on the sequencing of the customers along the routes, which is illustrated by the example shown in Figure \ref{fig:1}. This example involves four vertices, i.e., customers $A$, $B$, $C$ and depot $D$, whose locations are vertices of a square with side length 1. It is assumed that $f(w) = 0.08w$, the {\em curb weight} of a vehicle is 5, and the weight demands at customers $A$, $B$ and $C$ are 1, 15 and 1, respectively. Figure \ref{fig:1}(a) indicates a shortest route that incurs a cost of 4.32 while Figure \ref{fig:1}(b) shows a least-cost route with a cost of 4.12. By this example, we can observe that when the weight-related cost is imposed, the customers with more weight demands tend to be served with higher priority.

\begin{figure}[h]
\begin{center}
\resizebox{8cm}{!}{\includegraphics{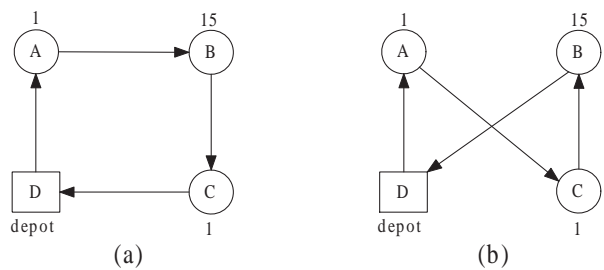}}
\end{center}
\caption{The impact of vehicle weight on customer sequence.} \label{fig:1}
\end{figure}

Numerous real-life applications of the vehicle routing models with weight-related costs arise naturally in Chinese expressway transportation system. As of the end of 2012, over twenty five Chinese provinces have implemented {\em toll-by-weight} schemes in which expressway toll per unit distance is levied according to a monotonically increasing function $f(w)$. Under such toll schemes, a vehicle is charged quite differently when it is empty, normally loaded and overloaded. Moreover, we can also find applications from the transportation service providers who are concerned with fuel consumption and the environmental impact of greenhouse gas (GHG) emissions. The fuel expenditure accounts for a large portion of the overall transportation cost and thus greatly affects the profits of transportation service providers \citep{Xiao2012}. The fuel consumption rate is directly related to vehicle weight; for example, for a vehicle of some type, its fully loaded status might consume more than twice as much diesel fuel as its empty status. In the last decade, the hazardous impacts of GHG, which is directly related to the consumption of fossil fuel, have received growing concerns from the public. Transport sector is one of the key sources of GHG emissions. As revealed by U.S. Greenhouse Gas Inventory Report published in 2012, transportation activities account for 32\% of U.S. CO$_2$ emissions from fossil fuel combustion in 2010. There is a clear tendency that transportation service providers will be forced to undertake the cost of their GHG emissions in the context of new regulations. The cost of fuel consumed or GHG emitted per unit distance by a vehicle with weight $w$ can be represented by a function $f(w)$.

In existing literature, we can only find several prior studies on the vehicle routing models that take vehicle weight into account. The VRP with toll-by-weight scheme was first mentioned by \citet{Shen2009}, who integrated toll-by-weight schemes into the traditional capacitated VRP and developed a simulated annealing algorithm to solve the problem. More recently, \citet{Zhangzz2012} tackled a vehicle routing model that involves toll-by-weight schemes and a single vehicle using a branch-and-bound algorithm. However, this algorithm cannot be adapted to solve the problem with multiple vehicles. \citet{Zhang2011} proposed a multi-depot VRP in which a constant surcharge $C_l$ is incurred for per unit distance per unit weight. Although they did not mention the applications of their problem in the context of Chinese expressway transposition system, imposing this weight-related surcharge is essentially equivalent to levying tolls according to a linear function $f_1(w) = C_l \times w$. The objective of their problem is to minimize the total transportation cost, consisting of distance-related cost, weight-related cost and the fixed cost of dispatching each vehicle. They implemented a scatter search algorithm to solve their problem.

The influence of vehicle weight is perceived in several vehicle routing models that incorporate the costs of fuel and GHG emissions in their objective functions. Most of these models focus on analyzing the influence of vehicle speed, and/or vehicle weight; therefore we can roughly divide them into three classes. The first class of articles only discussed the relationship between vehicle speed and GHG emissions (i.e., ignore the influence of vehicle weight), and investigated methodologies to determine both the route and speed of each vehicle for minimal fuel and emission costs; representative examples include \citet{Palmer2008,Figliozzi2010,Jabali2012}.

The articles in the second class only took vehicle weight into consideration by assuming vehicle speed to be constant. The seminal work of the vehicle routing models that relate vehicle weight to fuel consumption was conducted by \citet{Kara2007}, who introduced an Energy-Minimizing VRP. In this problem, the energy consumed for traversing an edge equals the product of the vehicle weight and the edge length, and the objective is to minimize the total energy rather than the total traveling distance. Lately, \citet{Huang2012} proposed a variant of the VRP with Simultaneous Pickups and Deliveries (VRPSPD) that incorporates the cost of fuel consumption and carbon emissions in the objective function. They assumed without proof that the fuel consumption and carbon emissions per unit distance are both linearly directly proportional to the vehicle weight. Both \citet{Kara2007} and \citet{Huang2012} formulated their problems into mixed integer programming (MIP) models and then solved the models using off-the-shelf MIP solvers. Based on some statistical data, \citet{Xiao2012} derived that the fuel consumption rate can be approximated to a linear function of vehicle weight. They proposed a string-model-based simulated annealing algorithm with a hybrid exchange rule to solve both the distance-minimization VRP and the fuel-minimization VRP. Their experiments on 27 benchmark instances show that the fuel-minimization VRP could help reduce fuel consumption by 5\% on average, compared to the corresponding distance-minimization VRP.

The articles in the last class tackled more general and practical vehicle routing models, where the cost of fuel consumption and GHG emissions is a function of vehicle speed and vehicle weight. The first such model was introduced by \citet{Kuo2010}, who built a fuel-minimization vehicle routing model on the time-dependent VRP (TDVRP) \citep{Ichoua2003,Kuo2009}. In the TDVRP, the time horizon is discretized into a number of intervals. For each edge and each time interval, there is a fixed and known travel speed for all vehicles. The objective of the TDVRP is to minimize the total travel times of all vehicles. In \citet{Kuo2010}, the authors modified the TDVRP by replacing minimizing total travel time with minimizing total fuel consumption. The miles per gallon (MPG) and the gallons per hour (GPH) for an empty vehicle traversing each edge in each time interval are input parameters. They assumed that the fuel consumption rate increases linearly with vehicle weight. For a given routing plan, the total fuel consumed can be easily calculated with the information of MPG, GPH and the vehicle weight on each edge. A simulated annealing algorithm was developed to solve the fuel-minimization TDVRP. \citet{Bektas2011} presented a Pollution-Routing Problem (PRP), which is an extension of the classical VRP with more comprehensive objective function that accounts for the costs of driving, GHG emissions and fuel. The driving cost is linearly directly proportional to the total travel time of all vehicles. The amount of fuel consumed on an edge is approximated as $(\alpha w + \beta v^2)\times d$, where $\alpha$ is an edge-specific constant, $w$ is the vehicle weight, $\beta$ is a vehicle-specific constant, $v$ is the vehicle speed and $d$ is the edge length. It can be easily observed that when the vehicle speed is fixed, the amount of fuel consumed per unit distance is a linear function of the vehicle weight. They formulated the PRP into an integer linear programming model, where the vehicle speed associated with each edge is a decision variable, and then applied CPLEX 12.1 with default settings to solve the model.

The aim of this paper is to address a problem extended from the split-delivery VRP with time windows (SDVRPTW) by modeling the cost per unit distance as a linear function of the vehicle's load weight $w$, i.e., $f(w) = a\times w + b$, where $a$ and $b$ are constant. The SDVRPTW is adapted from the well-studied vehicle routing problem with time windows (VRPTW) by allowing customer demands to exceed the vehicle capacity and relaxing the constraint that each customer must be visited exactly once. We refer the reader to \citet{Ho2004,Desaulniers2010,Archetti2011} for more details of the SDVRPTW. The SDVRPTW can be used to model the cases of delivering goods to or collecting goods from customers. In this article, we consider the collection case and therefore our problem is called the {\em split-collection vehicle routing problem with time windows and linear weight-related cost} (SCVRPTWL). The SDVRPTW is a special case of the SCVRPTWL with $a = 0$ and $b = 1$. Since the combination of several linear functions is also linear, the linear weight-related cost function can be used to model the applications with one or several cost factors, such as traveling distance, linear tolls, fuel consumption and GHG emissions.

The main contributions of this paper are summarized as follows. First, we introduce a more practical and general vehicle routing model that considers the vehicle weight. Second, we provide a branch-and-price-and-cut algorithm for the problem with any type of linear weight-related cost function. In this algorithm, the linear relaxation of the master problem at each branch-and-bound node is solved using a column generation procedure \citep{Desaulniers2005}. Although the master problem of the SCVRPTWL is similar to that of the SDVRPTW presented in \citet{Desaulniers2010}, our pricing subproblem is more complicated compared to that of the SDVRPTW. Thus, we designed a tailored label-setting algorithms to solve the pricing problem of the SCVRPTWL. The dominance procedures used in most of the existing label-setting algorithms are usually based on comparing two labels. However, our label-setting algorithm employs a novel and more efficient dominance procedure that checks whether a label is dominated by a set of labels. Several techniques such as the tabu column generator, decremental search, and bi-directional search are proposed to accelerate the column generation procedure. Third, our comprehensive experimental results show the effectiveness of our proposed algorithm and serve as a baseline for future researchers working on this and other related problems.

\section{Problem Description, Properties and Formulation}
\label{sec:pdf}
The SCVRPTWL is defined on a directed graph $G = (V, E)$, where $V = \{0, 1, \ldots, n, n+1\}$ is the vertex set and $E = \{(i, j)| i, j \in V, ~ i\neq j,~ i\neq n+1,~ j \neq 0\}$ is the edge set. Vertices $0$ and $n+1$ are known as the exit from and the entrance to the {\em depot}, respectively, and the set of remaining vertices $V_C =\{1, \ldots, n\}$ denotes the set of $n$ {\em customers}. Each vertex $i$ is characterized by a positive weight demand $d_i$, a service time $s_i$, and a time window $[e_i, l_i]$ within which the service can be started. For notational convenience, we set $d_0 = 0$, $d_{n+1} = +\infty$, $s_0 = s_{n+1} = 0$, $e_0 = e_{n+1} = 0$ and $l_0 = l_{n+1} = +\infty$. Each edge $(i, j) \in E$ has a nonnegative distance $c_{i,j}$ and a nonnegative traversing time $t_{i, j}$. We assume that both the distances and traversing times satisfy the triangle inequality. We denote by $V^+(i) = \{j \in V| e_i + s_i + t_{i, j} \leq l_j, (i, j)\in E\}$ and $V^-(i) = \{j \in V|e_j + s_j + t_{j, i} \leq l_i, (j, i)\in E\}$ the vertices immediately succeeding and preceding vertex $i$ on graph $G$.

We are given an unlimited number of homogeneous vehicles each with a weight capacity $Q$. Each vehicle is allowed to perform a collection pattern, i.e., it starts from vertex $0$, visits a subset of customers, collects some quantity of products at each visited customer and returns to vertex $n + 1$. A {\em collection pattern} is defined as a route with specified collected quantity at each vertex. If a vehicle arrives at vertex $i$ prior to $e_i$, it must wait until $e_i$ and then starts the service. A collection pattern is feasible if its associated route respects the time windows of all visited customers and its total collected demand does not exceed $Q$. The demand of each customer may be fulfilled by multiple vehicles, i.e., the customer demand may be greater than the vehicle capacity and a customer is allowed to be visited more than once. The traversal cost of edge $(i, j)$ paid by the vehicle with load weight $w_{i, j}$ is calculated by $c_{i, j} \times f(w_{i, j})$, where $f(w_{i, j}) = a \times w_{i,j} + b$ and the intercept $b$ is the cost incurred by the curb weight of the vehicle. The objective of the SCVRPTWL is to find a set of feasible collection patterns such that all customer demands are fulfilled and the total traversal cost is minimized.

We can easily observe a property (called {\em Property} 1) that there must exist an optimal solution in which each route visits each customer at most once. In \citet{Desaulniers2010}, the authors presented a theorem regarding the optimal solutions to the SDVRPTW. We first show that this theorem is also valid for the SDVRPTWL and then derive another two properties. All these three properties can help reduce the search space of the SDVRPTWL significantly.
\begin{theorem}
\label{theorem:1}
Given an instance of the SDVRPTWL where the matrices $[c_{i, j}]$ and $[t_{i,j}]$ satisfy the triangle inequality, there must exist an optimal solution to this instance in which no two vehicles have more than one split customer in common.
\end{theorem}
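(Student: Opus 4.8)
The plan is to adapt the classical demand-exchange argument for split-delivery routing (used by \citet{Desaulniers2010} for the SDVRPTW) to the weight-related cost setting, the new difficulty being that edge costs now depend on the load carried. Invoking Property 1, I would restrict attention to optimal solutions in which every route visits each customer at most once, and among these fix one that minimizes the total number of customer--route incidences, i.e.\ the number of pairs $(u,r)$ such that route $r$ collects a positive quantity at customer $u$. Suppose, toward a contradiction, that in this solution two routes $r_1$ and $r_2$ both serve two common split customers $i$ and $j$; then the four collected quantities $x_{r_1}^i$, $x_{r_2}^i$, $x_{r_1}^j$, $x_{r_2}^j$ are all strictly positive, and each route traverses a well-defined sub-path through $i$ and $j$.

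Next I would define a balanced exchange: transfer $\delta$ units of the demand of customer $i$ from $r_1$ to $r_2$ while simultaneously transferring $\delta$ units of the demand of customer $j$ from $r_2$ to $r_1$ (the forward shift), or the reverse. Because the same amount $\delta$ enters and leaves each route, the total load carried by each route is unchanged, so both capacity constraints remain satisfied; and because the service times $s_i$ and the route sequences are untouched, every time window stays respected throughout the shift. The only effect on the objective comes from the weight-dependent term. Writing $D_r(u)$ for the distance from $u$ (along the edge leaving $u$) to the depot on route $r$, increasing the quantity collected at $u$ on $r$ by $\eta$ raises the load on every edge after $u$ by $\eta$ and hence changes the cost by exactly $a\,\eta\,D_r(u)$, independently of the position of $u$ relative to other vertices. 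Summing the four contributions, the forward shift changes the total cost by
\[
a\,\delta\bigl[(D_{r_2}(i) - D_{r_2}(j)) - (D_{r_1}(i) - D_{r_1}(j))\bigr],
\]
and the reverse shift changes it by the negative of this quantity. Since $a \ge 0$ and $\delta \ge 0$, at least one of the two directions has a nonpositive per-unit cost change, and I would choose that direction.

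I would then push $\delta$ to its maximal value $\min(x_{r_1}^i, x_{r_2}^j)$ (or the analogous pair for the reverse shift), at which point the collected quantity at $i$ or $j$ on one of the routes drops to zero; because the cost change is linear in $\delta$, the total cost still does not increase. The customer whose quantity vanished can now be dropped from that route. Since both distances and traversing times satisfy the triangle inequality, skipping a zero-collection customer leaves the load on the two bypassed edges equal, so the weighted cost on the replacement edge does not increase, while the shorter travelling time only advances later arrival times, which waiting keeps feasible. The resulting solution is therefore still optimal, satisfies Property 1, and has strictly fewer customer--route incidences, contradicting the minimality of the chosen solution; hence no two routes share more than one split customer.

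The main obstacle will be establishing the cost-change formula cleanly. In the distance-only case the exchange is cost-neutral for every $\delta$, whereas here I must verify that for fixed route sequences the weight-dependent cost is affine in the collected quantities, that the marginal effect of a shift is captured exactly by the suffix distances $D_r(\cdot)$ regardless of the relative order of $i$ and $j$ on each route, and that the two shift directions are precisely antisymmetric. This antisymmetry is exactly what guarantees the existence of a non-increasing direction and is what lets the argument survive the passage from a constant $f$ to a linear one; a secondary point requiring care is confirming that the removal of a zero-quantity customer is cost-non-increasing for the weighted objective, which relies on the two bypassed edges carrying identical load.
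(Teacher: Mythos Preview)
Your proposal is correct and follows essentially the same demand-exchange argument as the paper: shift quantities between two routes sharing two split customers, observe that the cost change is linear in the shift, pick the non-increasing direction, push until one collected quantity vanishes, and delete that customer via the triangle inequality. Your formulation via suffix distances $D_r(u)$ and the minimality-of-incidences framing is in fact tidier than the paper's version, which tacitly assumes $i$ precedes $j$ on both routes and computes the change only along the $i$-to-$j$ sub-path.
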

\begin{proof}
{\em Proof.}
Suppose there exist two collection patterns $p_1$ and $p_2$ that have two common customers $i$ and $j$. The quantities collected at customers $i$ and $j$ in pattern $p_1$ (respectively, $p_2$) are $\delta_i^1$ and $\delta_j^1$ (respectively, $\delta_i^2$ and $\delta_j^2$). Note that $\delta_i^1 + \delta_i^2 \leq d_i$ and $\delta_j^1 + \delta_j^2 \leq d_j$. We increase $\delta_i^1$ and $\delta_j^2$  by $\epsilon$, decrease $\delta_i^2$ and $\delta_j^1$ by $\epsilon$ (see Figure \ref{fig:2}), and do not change the quantities collected at the remaining customers. Obviously, only the costs of edges between customers $i$ and $j$ may be affected by this quantity adjustment. Let $c^1_{i \rightarrow j}$ and $c^2_{i \rightarrow j}$ be the traveling distances from customer $i$ to customer $j$ in patterns $p_1$ and $p_2$, respectively (note that $i\rightarrow j$ may cover more than two vertices). After the adjustment, the cost of $p_1$ will increase by $a \times c^1_{i \rightarrow j} \times \epsilon$ while the cost of $p_2$ will decrease by $a \times c^2_{i \rightarrow j} \times \epsilon$. If $a \times \epsilon \times (c^1_{i \rightarrow j} - c^2_{i \rightarrow j}) \leq 0$, we have a motivation to increase $\delta_i^1$ until either $\delta_j^1 = 0$ or $\delta_i^2 = 0$ for less total cost. If $\delta_j^1 = 0$ or $\delta_i^2 = 0$, we can safely remove customer $j$ from $p_1$ or customer $i$ from $p_2$ without increasing the costs of pattern $p_1$ or $p_2$. We can analyze the case when $c^1_{i\rightarrow j} \geq c^2_{i\rightarrow j}$ in the same manner. Hence, if $p_1$ and $p_2$ exist in an optimal solution, they can be adjusted to have one customer in common without increasing the total cost. $\Box$
\end{proof}

\begin{figure}[!h]
\begin{center}
\resizebox{8cm}{!}{\includegraphics{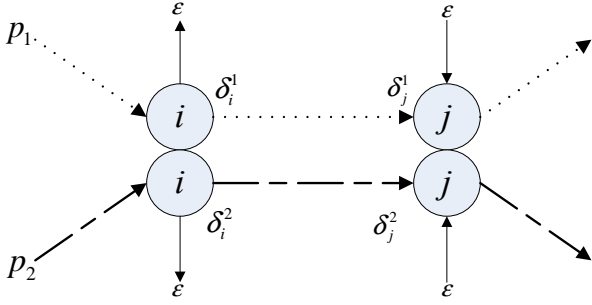}}
\end{center}
\caption{ An example of two patterns that have two customers in common.} \label{fig:2}
\end{figure}

The properties derived from Theorem \ref{theorem:1} are: there must exist an optimal solution to the SCVRPTWL in which
\begin{itemize}
\item at most one vehicle assigned to a route with two or more customers ({\em Property} 2).
\item each edge $(i, j) \in E_C$ appears at most once, where $E_C = \{(i, j)| (i, j)\in E,~ i, j \in V_C\}$ ({\em Property} 3).
\end{itemize}

We now present an arc-flow formulation for the SCVRPTWL, which will be exploited in the Dantzig-Wolfe decomposition proposed in the next section. This formulation uses the following additional notations:

{\bf Parameters}
\begin{itemize}
\item $K = \{1, 2, \ldots, m\}$: the set of $m$ available vehicles.
\item $M$: a sufficiently large positive number.
\end{itemize}

{\bf Decision Variables}
\begin{itemize}
\item $x_{i, j, k}$: the binary variable that equals 1 if vehicle $k$ traverses edge $(i, j)$, and 0 otherwise.
\item $w_{i, j, k}$: the load weight of vehicle $k$ who traverses edge $(i, j)$.
\item $z_{i, j, k}$: the cost paid by vehicle $k$ for traversing edge $(i, j)$.
\item $a_{i, k}$: the service starting time of vehicle $k$ at customer $i$.
\end{itemize}

Using these notations, the SDVRPTWL can be modeled as:
\begin{small}
\begin{align}
\min~& \sum_{k \in K}\sum_{i \in  V_C \cup \{0\}} \sum_{j \in V^+(i)}z_{i, j, k}  \label{eqn:mipaf:obj} \\
\mbox{s.t.}~~& z_{i, j, k} \geq c_{i, j}(a  w_{i,j,k} + b x_{i, j, k}), \forall~ k \in K, ~i \in  V_C \cup \{0\}, ~ j \in V^+(i) \label{eqn:mipaf:c1}\\
&\sum_{k \in K} \Big{(}\sum_{j \in V^+(i)}w_{i, j, k} - \sum_{j \in V^-(i)}w_{j, i, k}\Big{)} \geq d_i, ~\forall~ i \in V_C \label{eqn:mipaf:c3}\\
&\sum_{k \in K}\sum_{j \in V^+(i)}x_{i, j, k} \geq \Bigg{\lceil} \frac{d_i}{Q} \Bigg{\rceil}, ~\forall~i \in V_C\label{eqn:mipaf:c3.1}\\
&\sum_{i \in V_C \cup \{ n + 1\}}x_{0, i, k} = 1, ~\forall~ k \in K \label{eqn:mipaf:c4}\\
&\sum_{j\in V^+(i)}x_{i,j,k} = \sum_{j \in V^-(i) }x_{j, i, k}  \leq 1, ~\forall~ k \in K, ~ i \in V_C\label{eqn:mipaf:c5}\\
&\sum_{i \in V_C \cup \{0\}}x_{i, n+1, k} = 1, ~\forall~ k \in K \label{eqn:mipaf:c6}\\
&w_{i, j, k} \leq Qx_{i, j, k},  ~\forall~ k\in K, ~i \in  V_C \cup \{0\}, ~ j \in V^+(i) \label{eqn:mipaf:c8}\\
&a_{j,k} \geq a_{i,k} + s_i + t_{i, j} + M(x_{i, j, k} - 1), \nonumber \\
&~~~~~~~~~~~~~~~~~~~~~~~~~~~~\forall~ k \in K, ~i \in V_C \cup \{0\},~ j \in V^+(i)\label{eqn:mipaf:c9}\\
& e_i \leq a_{i, k} \leq l_i, ~\forall~ k \in K,~ i \in V\label{eqn:mipaf:c10}\\
&x_{i, j, k} \in \{0, 1\}, ~\forall~ k\in K, ~i \in  V_C \cup \{0\}, ~j \in V^+(i) \nonumber\\
&a_{i, k} \geq 0, ~ \forall ~ k \in K, ~ i \in V \nonumber\\
&z_{i, j, k} \geq 0,~ w_{i, j, k} \geq 0, \forall~ k\in K, ~i \in  V_C \cup \{0\}, ~ j \in V^+(i)\nonumber
\end{align}
\end{small}

The objective function (\ref{eqn:mipaf:obj}) aims at minimizing the total travel distance. The traversal cost of edge $(i, j)$ incurred by the vehicle with load weight $w_{i, j}$ is calculated by Constraints (\ref{eqn:mipaf:c1}). Constraints (\ref{eqn:mipaf:c3}) ensure that the demand of each customer is fulfilled. A minimum number of vehicles that serve customer $i$ is imposed by Constraints (\ref{eqn:mipaf:c3.1}), which are redundant constrains that are used to strengthen the linear relaxation of the model. Constraints (\ref{eqn:mipaf:c4}) -- (\ref{eqn:mipaf:c6}) define the structure of each possible route. The load weight of each vehicle on each edge cannot exceed $Q$ and thus Constraints (\ref{eqn:mipaf:c8}) apply. Constraints (\ref{eqn:mipaf:c9}) and (\ref{eqn:mipaf:c10}) ensure that all customer time windows are respected.

\section{Dantzig-Wolfe Decomposition}
\label{sec:dwd}
We can directly apply CPLEX to handle the arc-flow formulation. Nevertheless, after some preliminary experiments, we find that the size of the instances optimally solved by CPLEX is quite limited. To achieve optimal solutions for instances of practical size, we reformulate the SCVRPTWL into a {\em master problem} through Dantzig-Wolfe decomposition \citep{Dantzig1960} and then develop a branch-and-price-and-cut algorithm to solve it.

Applying Dantzig-Wolfe decomposition to the arc-flow formulation yields a master problem and a {\em pricing subproblem}. The master problem contains a decision variable for each collection pattern. We solve the master problem by a branch-and-bound procedure, where at each branch-and-bound node a lower bound is obtained by column generation procedure and the introduction of violated valid inequalities. The pricing subproblem is solved using a tailored label-setting algorithm.


\subsection{Master Problem}
To present the master problem, we define the following additional notations:

{\bf Parameters}
\begin{itemize}
\item $R^s$: the set of all routes visiting a single customer and satisfying the time window constraint.
\item $R^m$: the set of all routes visiting more than one customer and satisfying all time window constraints.
\item $P_r$: the set of all collection patterns compatible with route $r$.
\item $c_{r, p}$: the cost of collection pattern $p$, where $p \in P_r$.
\item $\alpha_{i,r}$: the binary parameter that equals 1 if customer $i$ is used in route $r$, and 0 otherwise.
\item $\beta_{i, j, r}$: the binary parameter that equals 1 if edge $(i, j)$ is used in route $r$, and 0 otherwise.
\item $\delta_{i, p}$: the quantity collected at customer $i$ in pattern $p$.
\end{itemize}

{\bf Decision variables}
\begin{itemize}
\item $\theta_{r, p}$, the nonnegative integer variable indicating the number of the vehicles assigned to pattern $p$ compatible with route $r$.
\item $\theta_r$, the nonnegative integer (respectively, binary) variable indicating the number of the vehicles assigned to route $r \in R^s$ (respectively, $R^m$). The binary requirement is derived from {\em Property} 2.
\end{itemize}

With the above notations, the master problem (MP) is given as:
\begin{small}
\begin{align}
z^{MP} = &\min \sum_{r\in R} \sum_{p \in P_r}c_{r, p}\theta_{r, p} \label{eqn:mipmp:obj} \\
\mbox{s.t.}~~& \sum_{r\in R}\sum_{p\in P_r}\delta_{i, p}\theta_{r, p} \geq d_i, ~\forall~  i \in V_C \label{eqn:mipmp:c1}\\
&\sum_{r \in R}\sum_{p \in P_r} \alpha_{i, r}\theta_{r, p} \geq \Bigg{\lceil} \frac{d_i}{Q} \Bigg{\rceil}, ~\forall~ i \in V_C \label{eqn:mipmp:c2}\\
& \theta_{r, p} ~\geq 0, ~\forall~ r \in R, ~ p\in P_r \label{eqn:mipmp:c4}\\
&\theta_r = \sum_{p \in P_r}\theta_{r, p}, ~\forall~ r \in R \label{eqn:mipmp:c5}\\
&\theta_r \in \{0, 1\}, ~\forall~ r \in R^m \label{eqn:mipmp:c6}\\
&\theta_r ~~~\textrm{integer}, ~\forall~ r \in R^s\label{eqn:mipmp:c7}
\end{align}
\end{small}
The objective function (\ref{eqn:mipmp:obj}) aims at minimizing the total travel distance. Constraints (\ref{eqn:mipmp:c1}) -- (\ref{eqn:mipmp:c2}) are equivalent to Constraints (\ref{eqn:mipaf:c3}) -- (\ref{eqn:mipaf:c3.1}), respectively. Constraints (\ref{eqn:mipmp:c2}) are redundant constraints that are used to strengthen the linear relaxation of the MP (called {\em LMP} for short). Constraints (\ref{eqn:mipmp:c4}) -- (\ref{eqn:mipmp:c7}) are binary or integrality requirements on the decision variables $\theta_{r, p}$ and $\theta_r$. In this formulation, each variable $\theta_{r, p}$ corresponds to a column composed of parameters $c_{r, p}, \delta_{i, p}$ and $\alpha_{i, r}$.

In practice, even for a small-size instance, the master problem contains a huge number of variables (or columns). Hence, this model cannot be directly handled by CPLEX. With a subset of variables $\theta_{r, p}$, the optimal solution of the LMP can be obtained with the help of the column generation procedure. Therefore, we do not need to enumerate all variables $\theta_{r, p}$ explicitly. Variables $\theta_{r}$ are not required in the LMP but will be used to check whether the optimal solution to the LMP is also optimal to the MP.

\subsection{Pricing Subproblem}
Given a dual solution to the LMP, the pricing subproblem is used to find a master variable $\theta_{r, p}$ (i.e., a collection pattern $p$ compatible with route $r$) that has the least reduced cost. Solving the pricing subproblem is essentially equivalent to enumerating all feasible collection patterns. Below we will use $\pi = (\pi_1, \ldots, \pi_n)$ and $\mu = (\mu_1, \ldots, \mu_n)$ as the values of the dual variables associated with Constraints (\ref{eqn:mipmp:c1}) -- (\ref{eqn:mipmp:c2}), respectively, and set $\pi_0 = \pi_{n+1} = 0$ and $\mu_0 = \mu_{n+1} = 0$ without loss of generality. The reduced cost of a pattern $p$ compatible with route $r$ can be calculated by:
\begin{align}
\bar{c}_{r, p} = c_{r, p} - \sum_{i \in r}(\delta_{i, p} \pi_i  + \mu_i)
\end{align}

Since all vehicles are identical, the pricing subproblem associated with each vehicle can be written as:
\begin{small}
\begin{align}
z^{PS} = & \min\sum_{(i, j) \in E}c_{i, j}(aw_{i, j} + bx_{i, j}) - \sum_{i \in V_C}\pi_i\delta_{i}
- \sum_{i\in V_C}\mu_i\sum_{j \in V^+(i)}x_{i, j} \label{eqn:pp:obj} \\
\mbox{s.t.}~~& \sum_{j \in V^+(0)}x_{0, j} = 1\label{eqn:pp:c1}\\
&\sum_{i \in V^-(n+1)}x_{i, n+1} = 1 \label{eqn:pp:c2}\\
&\sum_{j \in V^+(i)}x_{i,j} = \sum_{j\in V^-(i)}x_{j, i} \leq 1, ~\forall~i\in V_C  \label{eqn:pp:c3}\\
&\delta_i = \sum_{j \in V^+(i)}w_{i, j} - \sum_{j\in V^-(i) }w_{j, i},  ~\forall ~ i \in V_C \label{eqn:pp:c4.5}\\
& \delta_i \leq \min\{d_i, Q\} \sum_{j\in V^+(i)}x_{i, j}, ~\forall ~  i \in V_C \label{eqn:pp:c4}\\
&w_{i, j} \leq Qx_{i, j}, ~\forall ~ i \in V_C \cup  \{0\}, ~  j\in V^+(i) \label{eqn:pp:c5}\\
&a_j \geq a_i + s_i + t_{i, j} + M(x_{i, j} - 1),\nonumber\\
& ~~~~~~~~~~~~~~~~~~~~~~~~~~~~~~~~~~~~~\forall~i \in   V_C \cup \{0\}, ~  j\in V^+(i) \label{eqn:pp:c6}\\
&e_i \leq a_i \leq l_i, ~\forall~ i \in V_C  \label{eqn:pp:c7}\\
&x_{i, j} \in \{0, 1\},~  w_{i, j} \geq 0~,~ \forall~i \in V_C \cup \{0\}, ~  j \in V^+(i)\nonumber\\
&a_i \geq 0, ~\delta_i\geq 0, ~\forall~ i \in V_C \nonumber
\end{align}
\end{small}
where
\begin{itemize}
\item $x_{i, j}$: the binary variable that equals 1 if edge $(i, j)$ is used in the pattern, and 0 otherwise.
\item $w_{i, j}$: the load weight on edge $(i, j)$.
\item $a_i$: the service starting time at customer $i$.
\item $\delta_i$: the quantity collected at customer $i$.
\end{itemize}
The objective function (\ref{eqn:pp:obj}) aims to achieve the minimal reduced cost of all feasible collection patterns. Constraints (\ref{eqn:pp:c1}) and (\ref{eqn:pp:c2}) require that the route must start from vertex $0$ and end at vertex $n+1$. Constraints (\ref{eqn:pp:c3}) ensure that each customer can be visited at most once. Constraints (\ref{eqn:pp:c4.5}) and (\ref{eqn:pp:c4}) state that the quantity collected at customer $i$ cannot exceed $d_i$. Constraints (\ref{eqn:pp:c5}) guarantee that the flow on each edge cannot exceed the vehicle capacity and equals zero if that edge is not used. Constraints (\ref{eqn:pp:c6}) define the relationship between the service starting times of two consecutively visited customers. Constraints $(\ref{eqn:pp:c7})$ ensure that the time windows of all visited customers must be respected.

In many existing branch-and-price algorithms that were developed to solve the VRPTW or other vehicle routing problems, their pricing subproblems are usually {\em elementary shortest path problem with resource constraints (ESPPRC)} \citep{Feillet2004}. Examples can be found in \citet{Desrochers1992,Gutierrez-Jarpa2010,Azi2010} and \citet{Bettinelli2011}. Evidently, it is not appropriate to view our pricing subproblem as an ESPPRC due to the existence of variables $w_{i, j}$. We call our pricing subproblem {\em elementary least-cost path problem with resource constraints} (ELPPRC). Similar pricing subproblems can be found in \citet{Ioachim1998} and \citet{Mattos2012}. The ELPPRC is obviously $\mathcal{NP}$-complete since it can reduce to an ESPPRC by setting $a=0$ and $b =1$ in cost function $f(w)$. This implies that optimally solving the pricing subproblem is computationally expensive. In the next section, we design an ad hoc label-setting algorithm to optimally solve the pricing subproblem.

\section{Column Generation}
Column generation is applied to solve the LMP (i.e., the linear relaxation of the formulation (\ref{eqn:mipmp:obj}) -- (\ref{eqn:mipmp:c7})) augmented by appropriate branching decisions and some cutting planes. For an overview of column generation, the reader is referred to \cite{Desaulniers2005,Lubbecke2005}. The optimal solution value of the LMP is a lower bound of its associated branch-and-bound node. The column generation procedure cannot directly solve the LMP due to its inability of enumerating all variables $\theta_{r, p}$. Instead, it is an iterative procedure that alternates between solving a {\em restricted linear relaxation of the master problem} (RLMP) and a pricing subproblem. The RLMP is the LMP restricted to a subset of all variables $\theta_{r, p}$, which can be optimally solved by the simplex algorithm. The goal of solving the pricing subproblem is to identify the columns that have negative reduced costs with respect to the dual optimal solution of the current RLMP. If no such column is found, the column generation procedure is terminated with an optimal solution to the current RLMP, which is also an optimal solution to the LMP. Otherwise, we introduce one or more columns with negative costs into the current RLMP and restart the column generation iteration.

In this section, we first prove that the optimal solution of the pricing subproblem must be an {\em extreme collection pattern}. Based on this finding, we then develop a label-setting algorithm to solve the pricing subproblem. Finally, several strategies are introduced to accelerate the label-setting algorithm.

\subsection{Extreme Collection Pattern}
We first give the definition of the extreme collection pattern as follows.
\begin{definition}
A collection pattern $p$ is an extreme collection pattern if and only if it is composed of zero collections ($\delta_i = 0$), full collections ($\delta_i = d_i$) and at most one split collection ($0 < \delta_i < d_i$).
\end{definition}
Then, we can prove the following theorems.
\begin{theorem}
Given a route $r$, any collection pattern $p \in P_r$ can be represented by a convex combination of extreme collection patterns in $P_r$.
\end{theorem}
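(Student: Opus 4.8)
The plan is to observe that, once the route $r$ is fixed, a collection pattern $p\in P_r$ is completely determined by its vector of collected quantities $\delta=(\delta_i)_{i\in r}$, and that the set of all feasible such vectors is a bounded polytope. Since the vertex sequence, service times and time windows of $r$ are all fixed and the collected amounts do not enter the time-window feasibility, the only restrictions on $\delta$ are the demand bounds $0\le\delta_i\le\min\{d_i,Q\}$ at each visited customer together with the capacity constraint $\sum_{i\in r}\delta_i\le Q$ coming from $w_{i,j}\le Q$ on the edge entering the depot. Because all collections are nonnegative, the component bounds $\delta_i\le Q$ are implied by the aggregate capacity constraint, so $P_r$ is in bijection with the polytope
\[
\mathcal{P}_r=\Big\{\delta\in\mathbb{R}^{|r|}:\; 0\le\delta_i\le d_i\ \ \forall i\in r,\ \ \textstyle\sum_{i\in r}\delta_i\le Q\Big\}.
\]

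First I would invoke the fundamental representation theorem for bounded polytopes: every point of $\mathcal{P}_r$ is a convex combination of its extreme points (vertices). It then suffices to prove that every vertex of $\mathcal{P}_r$ is an extreme collection pattern, i.e.\ has at most one split coordinate. Note the converse (that every extreme pattern is a vertex) is not needed: the inclusion $\text{vertices}\subseteq\text{extreme patterns}$ already forces any point to be a convex combination of extreme patterns.

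The key step is therefore the contrapositive. I would show that any $\delta\in\mathcal{P}_r$ with two distinct split coordinates $i,j$, satisfying $0<\delta_i<d_i$ and $0<\delta_j<d_j$, is not a vertex. The idea is to transfer a small amount between the two coordinates: set $\delta^{\pm}=\delta\pm\epsilon(e_i-e_j)$. Since this transfer leaves the total $\sum_k\delta_k$ unchanged, the capacity constraint is preserved (and, with nonnegativity, so are all component bounds $\le Q$); and because $\delta_i,\delta_j$ are strictly interior to $(0,d_i)$ and $(0,d_j)$, for all sufficiently small $\epsilon>0$ both $\delta^{+}$ and $\delta^{-}$ remain feasible. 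As $\delta=\tfrac12(\delta^{+}+\delta^{-})$ with $\delta^{+}\neq\delta^{-}$, the point $\delta$ lies in the relative interior of a segment of $\mathcal{P}_r$ and cannot be a vertex. Hence every vertex has at most one split coordinate and is an extreme collection pattern; combining this with the representation theorem gives the claim for any $p\in P_r$.

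The main obstacle I anticipate is not the perturbation argument, which is routine, but the careful justification that the feasible quantity set is exactly $\mathcal{P}_r$. In particular I must argue that the per-edge load bounds $w_{i,j}\le Q$ collapse to the single aggregate constraint $\sum_{i\in r}\delta_i\le Q$, because the cumulative load along $r$ is monotone in the nonnegative collections so that only the final (largest) cumulative load is binding, and that time-window feasibility of the already-fixed route imposes no additional restriction on $\delta$. Once this reduction is established, the vertex characterization and hence the convex-combination representation follow directly.
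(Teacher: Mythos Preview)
Your proposal is correct and follows essentially the same approach as the paper: identify $P_r$ with the bounded polytope $\{\delta:\,0\le\delta_{v(i)}\le d_{v(i)},\ \sum_i\delta_{v(i)}\le Q\}$, invoke the extreme-point representation, and then argue that every vertex has at most one split coordinate. The only cosmetic difference is that the paper characterises the vertices by counting active constraints (with $|r|$ variables and $|r|+1$ constraints, at most one can be slack), whereas you use the equivalent direct perturbation $\delta\pm\epsilon(e_i-e_j)$; both arguments yield the same vertex description.
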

\begin{proof}
{\em Proof.} We assume $r = (v(1), v(2),  \ldots, v(|r|))$, where $v(i)$ is the index of the $i$-th vertex, $v(1) = 0$ and $|r| \geq 2$ is the number of vertices in route $r$. Let $p = (\delta_{v(1)}, \delta_{v(2)}, \ldots,  \delta_{v(|r|)})$ be an arbitrary feasible collection pattern compatible with route $r$. Then, $P_r$ is the feasible region defined by the following $|r|+1$ constraints:
\begin{align}
&\sum_{i = 1}^{|r|}\delta_{v(i)} \leq Q \label{cv:1}\\
&0 \leq \delta_{v(i)} \leq d_{v(i)}, ~\forall~ 1 \leq i \leq |r| \label{cv:2}
\end{align}

It is easy to observe that $P_r$ is a closed convex set. Thus, any point in $P_r$ can be represented by a convex combination of the extreme points of $P_r$, each corresponding to an extreme collection pattern.
There are $|r|$ decision variables $\delta_{i}$ $(1\leq i \leq |r|)$, so we must use $|r|$ active independent constraints to define each extreme point of $P_r$. In other words, only one of $|r|+1$ independent constraints can be loose in an extreme point of $P_r$.

As a result, if one extreme point has a loose constraint, e.g., $0 < \delta_{v(k)} < d_{v(k)}$, it must have $\sum_{i =1}^{|r|}\delta_{v(i)} = Q$, and $\delta_{v(i)}$ equals either $d_{v(i)}$ or zero for all $ v(i) \in r$ except $i = k$. If  $\sum_{i =1}^{|r|}\delta_{v(i)} < Q$, the corresponding extreme point must have either $\delta_{v(i)} =d_{v(i)}$ or $\delta_{v(i)} = 0$ for all $v(i) \in r$.
$\Box$
\end{proof}

\begin{theorem}
One of the optimal solutions to the pricing subproblem must be an extreme collection pattern.
\label{theorem:3}
\end{theorem}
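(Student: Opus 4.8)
```latex
The plan is to reduce the claim to Theorem~2, which establishes that the feasible
collection region $P_r$ for any fixed route $r$ is a closed convex polytope whose
extreme points are precisely the extreme collection patterns. The key observation is
that, once the route $r$ is fixed, the reduced cost of a pattern $p \in P_r$ is a
\emph{linear} function of the collection quantities $\delta_{v(1)}, \ldots,
\delta_{v(|r|)}$. Indeed, the term $\sum_{i \in r}\delta_{i}\pi_i$ is linear in the
$\delta_i$, and the traversal-cost term $\sum_{(i,j)\in r}c_{i,j}(aw_{i,j}+bx_{i,j})$
is also linear: for a fixed route the edge indicators $x_{i,j}$ are constants, and each
load weight $w_{i,j}$ on an edge equals the sum of the collection quantities taken at
the customers downstream of that edge along $r$, hence is an affine function of the
$\delta_i$. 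Therefore the objective~(\ref{eqn:pp:obj}), restricted to the patterns
compatible with a single fixed route, is an affine (linear plus constant) functional of
the vector $(\delta_{v(i)})$.

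First I would fix an optimal solution of the pricing subproblem and let $r^{*}$ be its
underlying route, so that the optimal pattern lies in $P_{r^{*}}$. By the reduction
just described, minimizing the reduced cost over all patterns compatible with $r^{*}$
amounts to minimizing a linear functional over the polytope $P_{r^{*}}$. Next I would
invoke the fundamental theorem of linear programming: a linear functional that attains
its minimum over a nonempty bounded polyhedron attains it at an extreme point of that
polyhedron. Since $P_{r^{*}}$ is bounded (by constraints~(\ref{cv:1})
and~(\ref{cv:2})) and nonempty, there is an extreme point $p^{*}$ of $P_{r^{*}}$ whose
reduced cost is no larger than that of the original optimal pattern. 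By Theorem~2, that
extreme point $p^{*}$ is an extreme collection pattern, and it is optimal for route
$r^{*}$, hence optimal for the pricing subproblem overall. This yields the desired
optimal solution in extreme-collection-pattern form.

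The main obstacle I expect is verifying cleanly that the objective is linear in the
$\delta_i$ once the route is fixed; the subtlety is that the weight $w_{i,j}$ on each
edge depends on which downstream customers are served and on how much is collected
there, so I must argue that $w_{i,j}$ is exactly the partial sum of the $\delta$-values
of the customers visited after edge $(i,j)$ on $r$ (a consequence of the flow-balance
constraints~(\ref{eqn:pp:c4.5}) in a collection setting). Writing $w_{i,j}$ explicitly
as such a partial sum makes the affine dependence on $(\delta_{v(i)})$ transparent and
lets the LP extreme-point argument go through. The only remaining care is to note that
this argument is carried out conditionally on the optimal route $r^{*}$: optimizing over
patterns is decoupled from optimizing over routes, so exhibiting an extreme pattern that
is optimal \emph{within} the optimal route is enough to conclude the existence of an
overall optimal solution that is an extreme collection pattern.
```
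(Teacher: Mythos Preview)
Your approach is correct and essentially identical to the paper's: fix the optimal route, observe that the reduced cost becomes a linear (affine) function of the collection quantities $(\delta_{v(i)})$, and then invoke the LP extreme-point theorem together with Theorem~2 to conclude that some extreme point of $P_{r^*}$---hence an extreme collection pattern---is optimal. One minor slip worth fixing: in the \emph{collection} setting the load $w_{i,j}$ on an edge equals the sum of the $\delta$-values at the customers visited \emph{before} that edge (upstream), not after, as the paper's expression $\sum_{j=1}^{i}\delta_{v(j)}$ in~(\ref{exp:8}) makes explicit; this reversal does not affect your linearity argument, but you should state the partial sum in the correct direction.
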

\begin{proof}
{\em Proof.} Assume the optimal solution of the pricing subproblem is a collection pattern compatible with route $r$. If route $r$, namely all variables $x_{i, j}$, is fixed, the pricing subproblem can be written as a linear relaxation of a bounded knapsack problem:
 \begin{align}
 C(r, Q) = \min~ & \sum_{i =1}^{|r|-1} c_{v(i), v(i + 1)}\Big{(}a\sum_{j = 1}^i\delta_{v(j)} + b\Big{)}\nonumber\\
  &- \sum_{i = 1}^{|r|}\bigg{(}\delta_{v(i)}\pi_{v(i)} + \mu_{v(i)}\bigg{)} \label{exp:8}\\
 \mbox{s.t.}~~ & \textrm{Constraints (\ref{cv:1}) and (\ref{cv:2})}. \nonumber
 \end{align}
From this model, we can easily find that one of the optimal solutions to the pricing subproblem must be an extreme point of $P_r$, which represents an extreme collection pattern.
 $\Box$
\end{proof}

According to Theorem \ref{theorem:3}, we can solve the pricing subproblem to optimality by only examining all extreme collection patterns, which significantly reduces the search space of the label-setting algorithm. The objective function (\ref{exp:8}) can be rewritten as:
\begin{align}
& f_r - \sum_{i =1}^{|r|}\delta_{v(i)}g_{v(i)} \label{exp:reduced}
\end{align}
where
\begin{align}
f_r = &\sum_{i = 1}^{|r|-1}bc_{v(i), v(i+1)}  - \sum_{i = 1}^{|r|}\mu_{v(i)}, \nonumber \\
g_{v(i)} = & \pi_{v(i)} - a\sum_{j = i}^{|r|-1}c_{v(j), v(j + 1)}, ~\forall ~ 1 \leq i \leq |r| -1, \nonumber\\
g_{v(|r|)} = & \pi_{v(|r|)}.  \nonumber
\end{align}
This shows that the reduced cost of a collection pattern consists of two components: the first component $f_r$ (called {\em fixed cost}) is a constant only related to route $r$, while the second component $- \sum_{i =1}^{|r|}\delta_{v(i)}g_{v(i)}$ (called {\em variable cost}) is determined by both route $r$ and quantity $\delta_{v(i)}$. The value of $g_{v(i)}$ can be viewed as the profit per unit product collected from vertex $v(i)$.

Given a (partial) route $r$ and the total collected quantity $\hat{q}$ ($0 \leq \hat{q} \leq \sum_{i\in r}d_i$) along this route, the minimal reduced cost $G(r, \hat{q}$) of all possible collection patterns can be computed using a greedy procedure shown in Algorithm \ref{alg:greedy}. The collection pattern associated with $G(r, \hat{q})$ produced by the greedy procedure is obviously an extreme collection pattern. Note that when $\hat{q} > \sum_{i\in r}d_i$, there does not exist feasible collection patterns. When route $r$ is fixed, we can view $G(r, q)$ as a function of $q$, called the {\em reduced cost function}. From Algorithm \ref{alg:greedy}, we observe that $G(r, q$) is a convex and continuous piece-wise linear function of $q$. We illustrate this function in Figure \ref{fig:function}, where $G(r, 0) = f_r$,  the slope $sl_k = g_{v'(k)}$ and $q_k = \sum_{i = 1}^kd_{v'(k)}$. We can directly get the value of $C(r, Q)$ from $G(r, q)$ by: $C(r, Q) = G(r, q^*) = \min_{0\leq q\leq Q}\{G(r, q)\}$, which implies that $\delta_{v(i)}$ is set to zero if $g_{v(i)} \leq 0$. Actually, to compute $C(r, Q)$, we can perform a modified Algorithm \ref{alg:greedy} in which $\hat{q} = Q$ is defined as the allowable capacity and only the vertices with $g_{v(i)} > 0$ are considered.

\begin{algorithm}[!htp]
\caption{The greedy procedure of computing $G(r, \hat{q})$.}
\label{alg:greedy}
\begin{algorithmic}[1]
\STATE INPUT: $g_{v(i)}$ and the total collected quantity $\hat{q}$;
\STATE Set $\delta_{v(i)} = 0$ for all $1\leq i \leq |r|$;
\STATE Sort all vertices in route $r$ by decreasing value of $g_{v(i)}$, yielding a sorted vertex list $(v'(1), v'(2), \ldots, v'(|r|))$.
\STATE The remaining capacity $rc$  $\leftarrow$ $\hat{q}$;
\STATE $ k $ $\leftarrow$ $1$;
\WHILE{$rc \geq 0$ and $k \leq |r|$} \label{alg1:linex}
\STATE Set $\delta_{v'(k)}$ $\leftarrow$ $\min \{d_{v'(k)}, rc\}$, $rc$ $\leftarrow$ $rc - \delta_{v'(k)}$ and $k$ $\leftarrow$ $k+1$;
\ENDWHILE
\STATE Compute $G(r, \hat{q})$ according to Expression (\ref{exp:reduced}).
\end{algorithmic}
\end{algorithm}


\begin{figure}[!th]
\begin{center}
\resizebox{8cm}{!}{\includegraphics{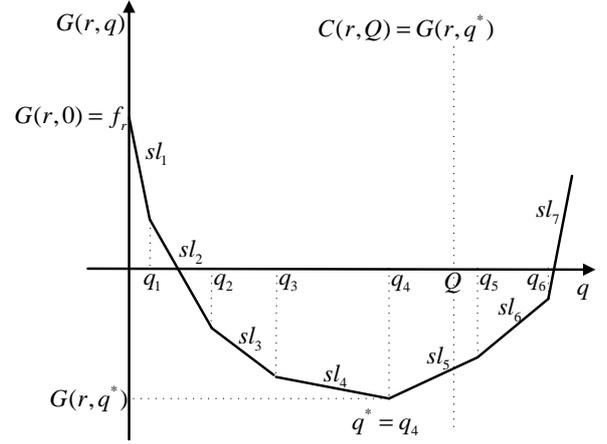}}
\end{center}
\caption{Graphic representation of the reduced cost function $G(r, q)$.} \label{fig:function}
\end{figure}

\subsection{The Label-Setting Algorithm}
\label{sub:lsa}
The label-setting algorithm is a widely used technique to solve the pricing subproblems of various vehicle routing models, such the ESPPRC \citep{Feillet2004,Righini2008}, the shortest path problem with resource constraints (SPPRC) \citep{Irnich2005} and the shortest path problem with time windows and linear node costs \citep{Ioachim1998}. The aim of solving the pricing subproblem is to identify the complete routes with negative reduced cost, namely $C(r, Q)$. In our label-setting algorithm, a multi-dimensional label $E_i = ({\tau}_i, N_i, V_i, G(r, q) = (F_i, {SL}_i, I_i))$ is defined to represent a state associated with a feasible (partial) route $r$ from vertex 0 to vertex $i$, where:
\begin{itemize}
\item $\tau_i$ is the earliest service starting time at vertex $i$, which must lie within $[e_i, l_i]$;
\item $N_i \subseteq V$ is the set of all visited vertices;
\item $V_i \subseteq V $ is the set of all vertices that could be reached from route $r$;
\item $G(r, q)$ is the reduced cost function associated with route $r$, which could be constructed using: $F_i = f_r$, ${SL}_i = \{g_j\}_{ j\in r}$ and $I_i = \{d_j\}_{j\in r}$.
\end{itemize}

Typically, a label has one component indicating the (reduced) cost of the route and several other components recording the consumed resources, i.e., each component has a fixed value. However, our proposed label has a special component $G(r, q)$, namely a function of $q$. Additionally, in our label we do not need a component related to the vehicle capacity. At vertex 0, we define $E_0 = (\tau_0, N_0, V_0, G(r, q) = (F_0, {SL}_0, I_0)) = (0, \{0\}, V_C \cup \{n+1\}, (0, \emptyset, \emptyset))$. Each vertex may have multiple labels and the optimal solution to the pricing subproblem can be achieved by identifying the labels with the smallest $C(r, Q)$ at vertex $n+1$.

A label $E_i$ can be extended to vertex $j \in V^+(i)$, yielding a new label $E_j$. The extension functions are:
\begin{itemize}
\item $\tau_j = \max\{e_j, \tau_i + s_i + t_{i, j}\}$;
\item $N_j = N_i\cup \{j\}$;
\item $V_j = V_i - \{k \in V^+(j): \tau_j + s_j + t_{j, k} > l_k\} - \{j\}$;
\item $F_j = F_i + bc_{i, j} - \mu_j$;
\item ${SL}_j = \{g_k \leftarrow g_k - ac_{i, j}: g_k \in {SL}_i\} \cup \{g_j = \pi_j\}$;
\item $I_j = I_i \cup \{d_j\}$.
\end{itemize}
Note that all labels do not contain any information regarding the order in which the vertices have been visited, and the labels $E_j$ with $V_j = \emptyset$ or $\tau_j > l_j$ are discarded. In the course of the label-setting algorithm, we cyclically examine all vertices, at each of which all labels that do not have successors would be extended. Extending a label at vertex $i$ may create as many new labels as the number of its successors. Undoubtedly, the number of labels would increase exponentially with the extension of the labels. To avoid exhaustive enumeration, dominance rules are employed to identify and eliminate the dominated labels. The performance of the label-setting algorithm heavily depends on the efficiency of the dominance rules, which determine the number of states generated.

The label $E_i$ can be regarded as a set of infinite number of labels $E_i(\hat{q}) = (\tau_i, N_i, V_i, \hat{q}, G(r, \hat{q}))$ for all $0 \leq \hat{q} \leq \sum_{i\in r}d_i$, where the quantity of products collected along the partial route is exactly $\hat{q}$ and $G(r, \hat{q})$ is the associated minimal reduced cost. Let $p_i$ be the partial extreme collection pattern associated with $G(r, \hat{q})$ and $\bar{P}_i$ be the set of all feasible extensions of the label $E_i(\hat{q})$ to vertex $n+1$. We use $p_i\oplus p'$ to denote the complete collection pattern resulting from extending $p_i$ by $p' \in \bar{P}_i$. As stated by \citet{Irnich2005,Desaulniers2010,Dabia2012}, a label $E^1_i(\hat{q}^1)$ is dominated by a label $E^2_i(\hat{q}^2)$ if the following conditions hold:
\begin{itemize}
\item Any feasible extension of $E^1_i(\hat{q}^1)$ is also feasible to $E^2_i(\hat{q}^2)$, namely $\bar{P}_i^1 \subseteq \bar{P}_i^2$;
\item The reduced cost of $p_i^1\oplus p'$ is greater than or equal to that of $p_i^2\oplus p'$ for each $p' \in \bar{P}_i^1$.
\end{itemize}
However, it is not straightforward to verify the above conditions since it requires to evaluate all feasible extensions of both labels. Instead, we propose the following sufficient conditions:
$E^1_i(\hat{q}^1)$ is dominated by $E^2_i(\hat{q}^2)$ if
\begin{enumerate}
\item $\tau_i^2 \leq \tau_i^1$;
\item $V_i^2 \supseteq V_i^1$;
\item $\hat{q}^2 \leq \hat{q}^1$
\item $G(r^2, \hat{q}^2) \leq G(r^1, \hat{q}^1)$
\end{enumerate}

The dominance rule for two labels $E_i(\hat{q}^1)$ and $E_i(\hat{q}^2)$ that have the same route $r$ can be described as: $E_i(\hat{q}^1)$ is dominated by $E_i(\hat{q}^2)$ if conditions 3 and 4 are satisfied. Based on this dominance rule, the labels $E_i(\hat{q})$ associated with the increasing part of the reduced cost function $G(r, q)$ (see Figure \ref{fig:function}) are dominated by label $E_i(q^*)$ and can be safely eliminated. Therefore, we can replace the increasing part of the reduced cost function with a zero slope piece and redefine $G(r, \hat{q})$ as the minimal reduced cost associated with route $r$ and an allowable capacity $\hat{q}$. Subsequently, we derive a dominance rule for two labels $E_i^1$ and $E_i^2$ as: $E^1_i$ is dominated by $E^2_i$ if a label $E^2_i(\hat{q}^2)$ can be always found to dominate $E^1_i(\hat{q}^1)$ for each feasible $\hat{q}^1$. Specifically, the sufficient conditions for $E^2_i$ to dominate $E^1_i$ are: conditions 1 and 2 are satisfied, and  $G^1(r^1, q) \geq G^2(r^2, q)$ for each $q \in [0, Q]$ (see Figure \ref{fig:dom}). Since this dominance rule involves only two labels, we call it the {\em pair dominance rule}.
\begin{figure}[!th]
\begin{center}
\resizebox{8cm}{!}{\includegraphics{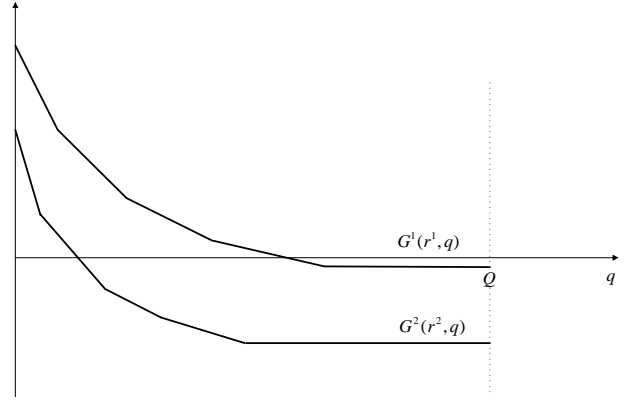}}
\end{center}
\caption{The graphic representations of the functions $G^1(r^1, q)$ and $G^2(r^2, q)$ (note that the increasing parts of both functions have been replaced with zero slope pieces).} \label{fig:dom}
\end{figure}

The pair dominance rule is quite weak since the number of cases that a reduced cost function lies below another one may not be very large. Fortunately, we find that although a label cannot be dominated by another one, it might be dominated by a set of labels. Based on this finding, we introduce a novel dominance rule called the {\em set dominance rule}, which is described as follows. It is easy to derive that $E^1_i$ is a dominated label if a label $E^{x}_i(\hat{q}^{x})$ can always be found to dominate $E^1_i(\hat{q}^1)$ for each feasible $\hat{q}^1$. Denoting by $\mathbb{E}_i$ the set of all labels ending at vertex $i$, we can define a label set $\mathbb{E}_i^1$ related to label $E_i^1$ as: $\mathbb{E}_i^1 =\{E_i^x \in \mathbb{E}_i: \tau_i^x \leq \tau_i^1, V_i^x \supseteq V_i^1, E_i^x \neq E^1_i\}$. Using the labels in $\mathbb{E}_i^1$, we can construct a minimal reduced cost function:
\begin{align}
G_{min}^1(q) = \min_{E_i^x \in  \mathbb{E}_i^1 }\{G^x(r^x, q)\}
\end{align}
As illustrated in Figure \ref{fig:gmin}, the function $G_{min}^1(q)$ is composed of the minimal part of all functions $G^x(r^x, q)$ ($x = 2, 3, 4$), and is not necessarily convex. If the curve $G_{min}^1(q)$ lies below the curve $G^1(r, q)$, we say that $E^1_i$ is dominated by the label set $\mathbb{E}_i^1$ and can be safely discarded. Figure \ref{fig:ex2} gives an example in which $E^1_i$ cannot be dominated by either $E^2_i$ or $E^3_i$, but it is dominated by set $\{E^2_i, E^3_i\}$ according to the set dominance rule. The implementation details of our set dominance rule is described in Appendix A.

\begin{figure}[!th]
\begin{center}
\resizebox{8cm}{!}{\includegraphics{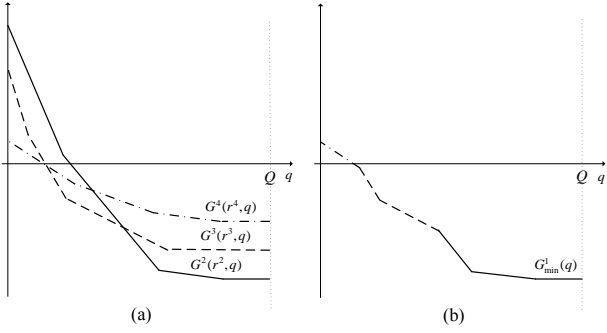}}
\end{center}
\caption{(a) The graphic representations of functions $G^2(r^2, q)$, $G^3(r^3, q)$ and $G^4(r^4, q)$. (b) The graphic representation of function $G^1_{min}(q)$.} \label{fig:gmin}
\end{figure}

\begin{figure}[!th]
\begin{center}
\resizebox{9cm}{!}{\includegraphics{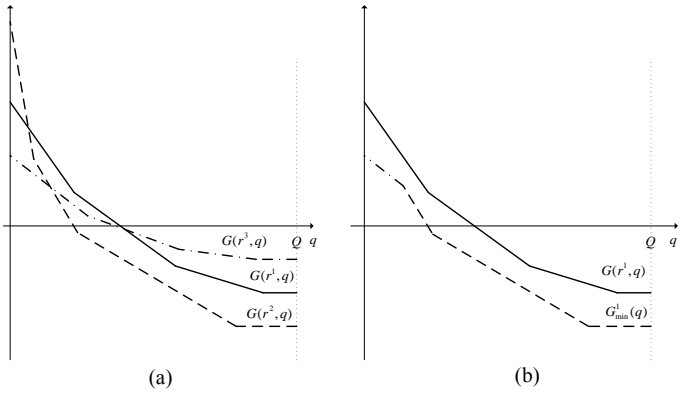}}
\end{center}
\caption{(a) The graphic representations of functions $G^1(r^1, q)$, $G^2(r^2, q)$ and $G^3(r^3, q)$. (b) $G^1_{min}(q)$ lies  below $G^1(r^1, q)$.} \label{fig:ex2}
\end{figure}

The label-setting algorithm proposed by \citet{Desaulniers2010} can be adapted to solve our pricing subproblem. However, compared with that algorithm, our proposed label-setting algorithm exhibits obvious advantages in the following two aspects. First, in the course of our label-setting algorithm, each feasible partial route has at most one label. The label-setting algorithm in \citet{Desaulniers2010} creates a huge number of dominated labels that cannot be efficiently eliminated by their proposed pair dominance rule. When extending a label from vertex $i$ to vertex $j$ ($j \neq n+1$), their label-setting algorithm creates up to three labels, corresponding to zero delivery, full delivery and split delivery, respectively. This type of extension is essentially equivalent to enumerating all feasible extreme delivery patterns compatible with a certain route and as a result a feasible partial route is very likely to be associated with a large number of labels. However, in fact, for each feasible partial route, only at most one label is non-dominated and needs to be kept. Second, our proposed set dominance rule is far more efficient to eliminate the dominated labels since it utilizes the information of all available labels ending at a certain vertex. In brief, the label-setting algorithm proposed by \citet{Desaulniers2010} creates and deals with much more labels than ours, and consequently requires more computational efforts. To show the superiority of our label-setting algorithm, we applied it to solve the LMP for all SDVRPTW instances used in \citet{Desaulniers2010}, and report the experimental results in Section \ref{sec:ce}.

\subsection{Accelerating Strategies}
\label{sub:as}
We have implemented the following three techniques to speed up the column generation procedure.
\subsubsection{Bounded Bidirectional Search}
\label{subsub:bbs}
As discussed in \citet{Righini2006,Righini2008}, the label-setting algorithm can be accelerated by bounded bidirectional search strategy. The resulting algorithm is called the {\em bounded bidirectional label-setting (BBLS) algorithm} whose procedure can be briefly summarized as the following three steps: (1) labels are extended forward from vertex $0$, generating a set of forward partial routes; (2) labels are extended backward from vertex $n+1$, generating a set of backward partial routes; and (3) pairs of forward and backward partial routes are joined together to generate complete routes.

The BBLS algorithms have  been successfully employed to solve the pricing subproblems for a variety of vehicle routing models, such as the VRPTW \citep{Desaulniers2008}, the VRP with simultaneous distribution and collection \citep{Dell2006}, the pickup and delivery problem with time windows \citep{Ropke2009}, the SDVRPTW \citep{Desaulniers2010} and the VRPTW with multiple use of vehicles \citep{Azi2010}. In the BBLS algorithms developed in the above-mentioned articles, the forward and backward extensions are almost the same due to the symmetric structure of the investigated problems. We find that a common objective of these problems is to minimize the overall traveling distance of all vehicles. This feature results in a property that the costs of backward and forward partial routes are independent, i.e., the cost of a backward partial route does not rely on its forward partial route.

If the route cost is determined by the arrival time at each vertex or the flow on each edge, the symmetric structure of the vehicle routing models would be destroyed to a certain extent; we call this type of cost the {\em cumulative cost}. In the vehicle routing models with cumulative costs, e.g., the SCVRPTWL, we can find that the cost of a backward partial route is heavily relied on its forward partial route. In \citet{Mattos2012}, the authors developed a branch-price-and-cut algorithm to solve the workover rig routing problem (WRRP) that incorporates a cumulative cost at each vertex. They claimed that the bounded bidirectional search strategy cannot be applied to their pricing subproblem. We could not make a conclusion on whether there exists a BBLS algorithm for the WRRP. However, after carefully analyzing the structure of our pricing subproblem, we find that it can still be optimally solved by a tailored BBLS algorithm in which the forward and backward extensions are considerably different.

The process of the bidirectional search strategy is pictorially shown in Figure \ref{fig:back}. Given a backward partial route $r = (v(1), v(2), \ldots, v(|r|))$, where $|r|\geq 2$ and $v(|r|) = n+1$, and the incoming flow $\hat{q}$, the minimal reduced cost $G^b(r, Q - \hat{q})$ of all backward partial collection patterns can be computed by:
\begin{align}
G^b(r, Q - \hat{q})=&f_r^b - \sum_{i = 1}^{|r|}\delta_{v(i)}g_{v(i)}^b\\
\mbox{s.t.}~~ &\sum_{i = 1}^{|r|}\delta_{v(i)} = Q - \hat{q} \nonumber\\
&0\leq \delta_{v(i)} \leq d_{v(i)}, ~\forall~ 1 \leq  i \leq |r| \nonumber
\end{align}
where
\begin{align}
f_r^b =& \sum_{i = 1}^{|r| -1}\Bigg{(}(aQ + b)c_{v(i), v(i+1)} - \mu_{v(i)}\Bigg{)};\nonumber\\
g_{v(1)}^b =& \pi_{v(1)}; \nonumber\\
g_{v(i)}^b = & a\sum_{j = 1}^{i -1}c_{v(j), v(j+1)} + \pi_{v(i)}, ~\forall~ 2 \leq i \leq |r|-1 ;\nonumber\\
g_{|r|}^b = & a\sum_{j = 1}^{|r|-1}c_{v(j), v(j+1)}.\nonumber
\end{align}
We refer the reader to Appendix B for the detailed derivation of $G^b(r, Q - \hat{q})$. Algorithm \ref{alg:greedy} can still be used to compute the value of $G^b(r, Q - \hat{q})$. When route $r$ is fixed, $G^b(r, Q - q)$ can be viewed as a function of the allowable capacity $Q - q $ (e.g., see Figure \ref{fig:function2}).

\begin{figure}[!th]
\begin{center}
\resizebox{8cm}{!}{\includegraphics{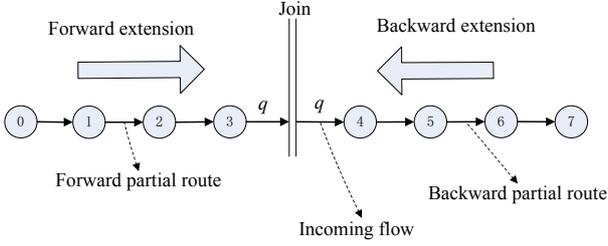}}
\end{center}
\caption{ The bidirectional search strategy.} \label{fig:back}
\end{figure}

\begin{figure}[!th]
\begin{center}
\resizebox{7cm}{!}{\includegraphics{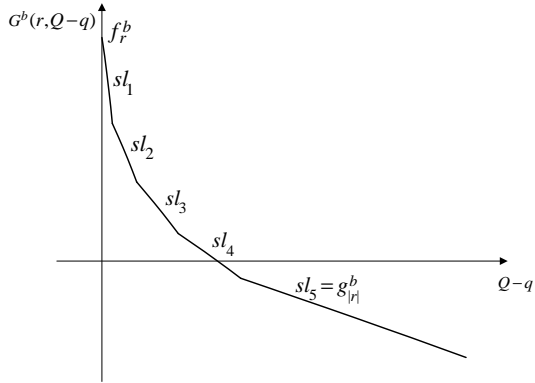}}
\end{center}
\caption{Graphic representation of the reduced cost function $G^b(r, Q - q)$.} \label{fig:function2}
\end{figure}

We use a label $E^b_i = \Big{(}\tau^b_i, N^b_i, V^b_i, G^b(r, Q - q) = (F^b_i, SL_i^b, I_i^b)\Big{)}$ to represent a state in backward extension, where:
\begin{itemize}
\item $\tau^b_i$ represents the minimum time which must be consumed since the departure from vertex $i$ up to the arrival at vertex $n+1$;
\item $N_i^b$ and $V^b_i$ have the same definitions as $N_i$ and $V_i$;
\item $G^b(r, Q - q)$ represents the reduced cost function associated with the backward partial route $r$ and the allowable capacity $Q - q$, which could be represented by: $F^b_i = f^b_r$, ${SL}_i^b = \{g^b_j\}_{j \in r}$ and $I_i^b=\{d_j\}_{j \in r}$.
\end{itemize}
We define $E_{n+1}^b$ $= \big{(}\tau_{n+1}^b,$ $N_{n+1}^b,$ $V_{n+1}^b,$ $G^b(r, Q - q)$ $= (F_{n+1}^b,$ ${SL}^b_{n+1},$ $I_{n+1}^b)\big{)}$ $= \big{(}0, \{n+1\},$ $V_C\cup \{0\},$ $(0, \{g^b_{n+1}\},$ $\{d_{n+1}\})\big{)}$. The overall time resource $T$ is equal to the maximum feasible arrival time at vertex $n+1$, namely $T = \max_{i\in V_C}\{l_i + s_i + t_{i, n+1}\}$. Analogously, the backward partial route is extended from vertex $j$ to vertex $i$ according to the following functions:
\begin{itemize}
\item $\tau_i^b = \max\{T - l_i - s_i , \tau_j^b + s_j + t_{i, j}\}$;
\item $N^b_i = N^b_j \cup \{i\}$;
\item $V^b_i = V^b_j - \Big{\{}k \in V^-(i): \tau_i^b + s_i + t_{k, i} > T - e_k - s_k \Big{\}} - \{i\}$;
\item $F^b_i = F^b_j + (aQ + b)c_{i, j} - \mu_i$;
\item ${SL}_i^b =\Big{\{}g_k \leftarrow g_k + ac_{i, j}: g_k \in {SL}^b_j\Big{\}} \cup \{g_i = \pi_i\}$;
\item $I^b_i = I^b_i \cup \{d_i\}$.
\end{itemize}
We discard the labels with $V^b_i = \emptyset$ or $\tau_i > T - e_i - s_i$, and still use the set dominance rule that is described in Section \ref{sub:lsa} to eliminate the dominated backward partial routes.

When applying the BBLS algorithm, we consider time as the critical resource and only extend forward and backward labels whose consumed time resources are less than $T/2$, namely $\tau_i < T/2$ and $\tau^b_i < T/2$.  A forward label $E_i = \big{(}\tau_i, N_i, V_i, G(r, q)\big{)}$ and a backward label $E^b_j = \big{(}\tau^b_j, N^b_j, V_j^b, G^b(r, Q- q)\big{)}$ can be joined together to form a complete feasible route if $\tau_i + s_i + t_{i, j} + \tau^b_j \leq T$ and $N_i \cap N_j^b = \emptyset$. The cost of the resulting complete collection pattern is achieved using the information of $G(r, q)$ and $G^b(r, Q - q)$ as follows. The fixed cost of the complete collection patten is the sum of $F_i + F_j^b + bc_{i, j}$. With the values of $g_k \in {{SL}_i \cup {SL}_j^b}$ and the collected quantity $Q$, we can use Algorithm \ref{alg:greedy} to decide the quantity $\delta_k$ collected at each visited vertex $k$ and then compute the variable cost of the complete collection pattern as $- \sum_{k \in N_i \cup N_j^b}\delta_kg_k$. The minimum cost among all complete collection patterns is the optimal solution value of the pricing subproblem. Usually, at each column generation iteration we identify a number of columns with negative reduced cost and then add them into the current RLMP.

\subsubsection{Heuristic Column Generator}
Heuristics may identify negative reduced cost columns with much less computation time, compared to the exact label-setting algorithm. To avoid solving the pricing subproblem optimally at each column generation iteration, we develop an adaptive greedy heuristic (AGH), as shown in Algorithm \ref{alg:agh}, to heuristically and rapidly identify negative reduced cost columns. At each column generation iteration, we first use the AGH to solve the pricing subproblem. If it manages to obtain some columns with negative reduced cost, we add these columns into the RLMP and start the next iteration. Otherwise, we invoke the BBLS algorithm to solve the pricing subproblem to optimality.

The AGH tries to identify up to {\em maxCol} negative reduced cost columns, starting with a set $R_0$ of routes with zero reduced cost in the optimal solution of the current RLMP. We define {\em maxIter} as the maximum number of iterations associated with each route in $R_0$, $\rho_i$ as a {\em valuation} for each vertex $i$ that is used to calculate its priority value, and $\eta$ ($0 < \eta < 1$) as a penalty factor. Since the best extreme collection pattern of a given route can be easily obtained, in the AGH we use a route to represent a solution of the pricing subproblem.  At the beginning of the outer loop, the valuations of all vertices are initialized to one and all vertices that are not included in the current route $r$ are stored in a vertex queue {\em vertex\_queue} in an order of decreasing value of $\pi_i \times \rho_i$ (lines \ref{line:agh:01} -- \ref{line:agh:02} in Algorithm \ref{alg:agh}).

In each iteration, the heuristic pops the vertices $u$ one by one from {\em vertex\_queue}, checks their insertion into the current route $r$ by a subroutine {\em GreedyInsert}($r$, $u$) given in Algorithm \ref{alg:greedyinsert}, and adds the resulting feasible routes with negative reduced cost into a route pool {\em route\_pool} (see lines \ref{line:agh:1} -- \ref{line:agh:2} in Algorithm \ref{alg:agh}). Upon collecting {\em maxCol} negative reduced cost columns, we terminate the heuristic and return {\em rout\_pool} (lines \ref{line:agh:1.1} -- \ref{line:agh:1.2} in Algorithm \ref{alg:agh}). If the resulting route $r'$ is better than the currently best route $r^*$, we update $r^*$ by $r'$. Otherwise, the corresponding parameter $\rho_u$ is decreased to $\eta\times \rho_u$, which delays the checking of its insertion in the next iteration. This strategy is very similar to the adaptive process used by other search techniques such as the ejection pool algorithm \citep{Luo2013}. Whenever {\em vertex\_queue} becomes empty, the heuristic removes one non-depot vertex $u$ from the current route by alternatively using a greedy procedure {\em GreedyRemove}($r$) shown in Algorithm \ref{alg:greedyremove} and a random way, decrease $\rho_u$ to $\eta\times \rho_u$, and reset {\em vertex\_queue} based on toggling between two sorting rules (see lines \ref{line:agh:3} -- \ref{line:agh:4} in Algorithm \ref{alg:agh}). To obtain the columns, we compute the best collection patterns compatible with each route $u$ in {\em route\_pool} by performing Algorithm \ref{alg:greedy} with consideration of only the vertices with $g_{v(i)} > 0$.

\begin{algorithm}[!h]
\caption{The Adaptive Greedy Heuristic.} \label{alg:agh}
\begin{tiny}
\begin{algorithmic}[1]
\STATE INPUT: A set of routes $R_0$, {\em maxCol}, {\em maxIter} and $\eta$;
\STATE Define {\em vertex\_queue} and  {\em route\_pool} as a vertex queue and a route pool, respectively;
\STATE Set {\em flag} $\leftarrow$ false, $k  \leftarrow $ {\em maxIter} and $r^* \leftarrow$ any route $r \in R_0$;
\WHILE{$k \leq $ {\em maxIter} and $R_0$ is not empty}
\IF{$k$ = {\em maxIter} }
\STATE $k \leftarrow 1$;
\STATE $r$ $\leftarrow$ randomly select one route from $R_0$ and remove $r$ from $R_0$;
\STATE Set $\rho_i \leftarrow 1$ for all $i \in V$ and {\em vertex\_queue} $\leftarrow$ $V_C - $ \{all vertices in $r$\}; \label{line:agh:01}
\STATE Sort all vertices $i$ in {\em vertex\_queue} by decreasing value of $\pi_i \times \rho_i$;\label{line:agh:02}
\ENDIF
\WHILE{{\em vertex\_queue} is not empty}\label{line:agh:1}
\STATE $u \leftarrow$ pop the top element in {\em vertex\_queue};
\STATE $r' \leftarrow$ {\em GreedyInsert}($r$, $u$);
\IF{$r' \neq$ {\em null} and $C(r', Q) < C(r*, Q)$}
\STATE $r \leftarrow r'$;
\ENDIF
\IF{$r' \neq$ {\em null} and $C(r', Q)$ is negative}
\STATE Add $r'$ into {\em route\_pool};
\IF{the size of {\em route\_pool} is equal to {\em maxCol}} \label{line:agh:1.1}
\RETURN {\em route\_pool};
\ENDIF \label{line:agh:1.2}
\IF{$C(r', Q) < C(r^*, Q)$}
\STATE $r^* \leftarrow r'$;
\STATE $\rho_u \leftarrow {\rho_u}/{\eta}$;
\ENDIF
\ENDIF
\ENDWHILE \label{line:agh:2}
\IF{{\em flag} $=$ false}\label{line:agh:3}
\STATE $r$ $\leftarrow$ {\em GreedyRemove}($r$);
\STATE {\em vertex\_queue} $\leftarrow$ $V_C - $ \{all vertices in $r$\};
\STATE Sort all vertices $i$ in {\em vertex\_queue} by decreasing value of $d_i \times \pi_i \times \rho_i$;
\STATE {\em flag } $\leftarrow$ true;
\ELSE
\STATE $r$ $\leftarrow$ randomly remove a vertex $u$ except $0$ and $n+1$ from $r$ and set $\rho_u \leftarrow \eta \times \rho_u$;
\STATE {\em vertex\_queue} $\leftarrow$ $V_C - $ \{all vertices in $r$\};
\STATE Sort all vertices $i$ in {\em vertex\_queue} by decreasing value of $\pi_i \times \rho_i$;
\STATE {\em flag } $\leftarrow$ false;
\ENDIF \label{line:agh:4}
\STATE $k \leftarrow k +1$;
\ENDWHILE
\RETURN {\em route\_pool};
\end{algorithmic}
\end{tiny}
\end{algorithm}

\begin{algorithm}[!h]
\caption{{\em GreedyInsert}($r$, $i$).} \label{alg:greedyinsert}
\begin{small}
\begin{algorithmic}[1]
\STATE $r' \leftarrow $ {\em null};
\FOR{each pair of two consecutive vertices $u$ and $v$ in $r$}
\STATE Insert $i$ between $u$ and $v$;
\IF{the resulting $r$ is infeasible}
\STATE {\bf Continue};
\ELSIF{$r'$ is not initialized}
\STATE $r' \leftarrow r$;
\ELSIF{$C(Q, r) < C(Q, r')$}
\STATE $r' \leftarrow r$;
\ENDIF
\STATE Restore $r$ to its state before inserting $i$;
\ENDFOR
\RETURN $r'$.
\end{algorithmic}
\end{small}
\end{algorithm}

\begin{algorithm}[!h]
\caption{{\em GreedyRemove}($r$).} \label{alg:greedyremove}
\begin{small}
\begin{algorithmic}[1]
\STATE $r' \leftarrow$ {\em null};
\FOR{each vertex $u \in r$ except $0$ and $n+1$}
\STATE Remove $u$ from $r$;
\IF{$r'$ is not initialized}
\STATE $r' \leftarrow r$;
\ELSIF{$C(Q, r) < C(Q, r')$}
\STATE $r' \leftarrow r$ and $v \leftarrow u$;
\ENDIF
\STATE Restore $r$ to its state before deleting $u$;
\ENDFOR
\STATE $\rho_v \leftarrow \eta \times \rho_v$;
\RETURN $r'$.
\end{algorithmic}
\end{small}
\end{algorithm}

\subsubsection{Decremental Search Space}
The decremental search space was introduced independently by \citet{Boland2006} and \citet{Righini2008}. It starts from solving the pricing subproblem with the elementary requirements of all customers being relaxed, i.e., each customer can be visited more than once in a route. In our label-setting algorithm, if the elementary requirement of vertex $j$ is relaxed, it will not be removed from $V_j$ when the label is extended from vertex $i$ to vertex $j$, and is allowed to exist in $N_i\cap N_j^b$ when joining labels. If the computed least-cost path is nonelementary, the customers that are visited more than once are required to be elementary and the pricing subproblem is solved again. This process is repeated until an elementary least-cost route is found. Our implementation of the decremental search space technique is the same as the one described in \citet{Desaulniers2010}. This acceleration technique has also been employed in the branch-and-price algorithms for solving several other vehicle routing models, such as the VRP with discrete split deliveries and time windows \citep{Salani2011}, the VRP with deliveries, selective pickups and time windows \citep{Gutierrez-Jarpa2010} and the multi-depot VRPTW \citep{Bettinelli2011}.

\section{Branch-and-Price-and-Cut Algorithm}
\label{sec:brc}
Branch-and-price-and-cut is one of the leading solution procedure for many large-scale integer programming models (e.g., see \citet{Ropke2009,Barnhart2000,Belov2006,Hwang2008}). Over the course of the branch-and-bound search, some violated valid inequalities are dynamically added into the model. In our branch-and-price-and-cut algorithm, the initial set of columns corresponds to the set of all one-customer routes, namely $r = (0, i, n+1)$ for each $i \in V_C$. At each branch-and-bound node, we first optimally solve the LMP using the column generation procedure to obtain a lower bound. For the node that cannot be pruned, we next try to identify the $k$-path inequalities and {\em strong minimum number of vehicles inequalities} that are violated by the current linear solution. If such violated inequalities are found, we add them into the model and invoke the column generation procedure again to further improve the lower bound. The above procedure is repeated until the node is pruned or no violated inequalities can be found.

In this section, we first describe two types of valid inequalities. This is followed by search and branching strategies that guide the exploration of the branch-and-bound tree.

\subsection{Valid Inequalities}
We use two types of valid inequalities for the SCVRPTWL, namely the $k$-path inequality and the {\em strong minimum number of vehicles inequalities}, which have been implemented by \citet{Archetti2011} for the SDVRPTW. These inequalities are defined on the master problem variables $\theta_{r,p}$. After adding some valid inequalities into the master problem, the subproblem as well as the label-setting algorithm need to be modified accordingly. Below, we only discuss in detail the treatment of these inequalities in forward extension. The modifications on backward extension can be easily derived in a similar manner.

\subsubsection{$k$-path Inequalities}
The $k$-path inequalities are expressed as:
\begin{align}
&\sum_{r \in R}\sum_{p \in P_r}\sum_{(i, j) \in E^-(S)} \beta_{i, j, r} \theta_{r, p} \geq \Bigg{\lceil} \frac{\sum_{i \in S}d_i}{Q} \Bigg{\rceil}, ~\forall~ S \in \Gamma \label{eqn:mipmp:c3}
\end{align}
where the binary parameter $\beta_{i, j, r} = 1$ if edge $(i, j)$ is used in route $r$, $E^-(S) = \{(i, j)\in E| i \in S, j \notin S\}$ is the set of edges leaving the customer subset $S$, and $\Gamma$ is the set of the subsets $S \in V_C$. Let $\lambda = (\lambda_{S_1}, \ldots, \lambda_{S_{|\Gamma|}})$ be the values of the dual variables associated with Constraints (\ref{eqn:mipmp:c3}). The reduced cost $\bar{c}_{r, p}$ and the fixed cost $f_r$ become:
\begin{align}
\bar{c}_{r, p} &=  c_{r, p} - \sum_{i \in r}(\delta_{i, p} \pi_i  + \mu_i) - \sum_{S\in \Gamma} \sum_{(i, j)\in E^-(S)\cap r}\lambda_S \nonumber\\
f_r = &\sum_{i = 1}^{|r|-1} \bigg{(}bc_{v(i), v(i+1)}  -  \sum_{S\in \Gamma
:(v(i), v(i+1))\in E^-(S)} \lambda_S \bigg{)}  - \sum_{i = 1}^{|r|}\mu_{v(i)} \nonumber
\end{align}

Handling the new dual variable $\lambda_S$ in the label-setting algorithm needs to modify the extensions function related to the fixed cost as follows:
\begin{align}
F_j &= F_i + bc_{i, j} - \mu_j - \sum_{S \in \Gamma: (i, j) \in E^-(S)} \lambda_S \nonumber
\end{align}
Moreover, when joining two labels, the fixed cost of the complete collection patten becomes $F_i + F_j^b + bc_{i, j} - \sum_{S \in \Gamma: (i, j) \in E^-(S)} \lambda_S$. It is worthy to mention that the introduction of the $k$-path inequalities only affects the fixed cost of the reduced cost.

To identify the violated $k$-path inequalities, we have implemented three types of separation heuristics, which have been used in the branch-and-price-and-cut algorithms for the SDVRPTW \citep{Desaulniers2010,Archetti2011}. The first one was the {\em partial enumeration heuristic} proposed by \citet{Desaulniers2010} and the other two were the {\em extended shrinking heuristic} and the {\em route-based algorithm} developed by \citet{Archetti2011}. We refer the reader to these two articles for full details of these three separation heuristics. Note that our $k$-path inequalities only take the vehicle capacity constraints into consideration.

\subsubsection{Strong Minimum Number of Vehicles (SMV) Inequalities}
Define $V_S$ as the set of customers $i \in V_C$ with $d_i \leq Q$. The SMV inequalities are expressed as:
\begin{align}
\sum_{r\in R}\sum_{p \in P_r} (2\alpha_{i, r, p}^F + \alpha_{i, r, p}^{SZ})\theta_{r, p} \geq 2. ~\forall~i \in V_S \label{eqn:mipmp:smv}
\end{align}
where at customer $i$ in pattern $p$ compatible with route $r$, if a full collection is performed, then the binary parameter $\alpha_{i, r, p}^F =1$, and if a split or zero collection is performed, then the binary parameter $\alpha_{i, r, p}^{SZ} =1$. This type of inequality was first proposed by \citet{Archetti2011}. Let $\gamma = (\gamma_1, \ldots, \gamma_n)$ be the values of the dual variables associated with Constraints (\ref{eqn:mipmp:smv}). The reduced cost $\bar{c}_{r, p}$ becomes
\begin{align}
\bar{c}_{r, p} &=  c_{r, p} - \sum_{i \in r}(\delta_{i, p} \pi_i  + \mu_i) - \sum_{i \in V_S}(2\alpha_{i, r, p}^F + \alpha_{i, r, p}^{SZ})\gamma_i \nonumber\\
\end{align}

To deal with the SMV inequalities, we need to modify the label-setting algorithm. First, define a new label $E_i$ that contain additional components as follows:
\begin{align}
E_i = \Big{(} {\tau}_i, N_i, V_i, G(r, q) = (F_i, {SL}_i, I_i), (\chi_i^j)_{j \in V_S}, Q_i \Big{)} \nonumber
\end{align}
where $\chi_i^j$ for all $j \in V_S$ are initialized to zero, $\chi_i^j = 1$ indicates that customer $j$ is forced to be full collection and $Q_i$ is the remaining capacity. Next, if $\gamma_j > 0$, we need to create two types of labels along edge $(i, j)$: type 1 label is for a zero or split collected is performed and type 2 label is for a full collection. The forward extension functions involving the new dual variables $\gamma_i$ and the new label components are as follows. For type 1 label, we have:
\begin{align}
F_j &= F_i + bc_{i, j} - \mu_j - \gamma_i;\nonumber \\
\chi_j^k& = \chi_i^k; \nonumber\\
Q_j &= Q_i.\nonumber
\end{align}
and for type 2 label, we have:
\begin{align}
F_j &= F_i + bc_{i, j} - \mu_j - 2\gamma_i;\nonumber\\
\chi_j^k & = \left\{
\begin{array}{ll}
1, & \textrm{if $k = j$;}\\
\chi_j^k, & \textrm{otherwise;}\\
\end{array}
\right. \nonumber\\
Q_j & = Q_i - d_j.\nonumber
\end{align}

Given a label $E_i$, the minimum reduced cost is calculated as follows. The fixed cost $f_r$ is the sum of $F_i$ and $- \sum_{k\in V_S: \chi_i^k =1}d_kg_k$. With the values of $g_k \in {SL}_i -  \{v: \chi_i^v = 1\}$ and $\hat{q}$ equal to the remaining capacity $Q_i$, we can perform the modified Algorithm \ref{alg:greedy} to achieve the value of $-\sum_{k \in N_i - \{v: \chi_i^v = 1\}}\delta_kg_k$. The graph $G(r, q)$ is illustrated in Figure \ref{fig:smv}, where we assume $G(r, q)$ with $0 \leq q \leq \sum_{k\in V_S: \chi_i^k =1}d_k$ equals a sufficiently large positive constant $M$.

\begin{figure}[!th]
\begin{center}
\resizebox{8cm}{!}{\includegraphics{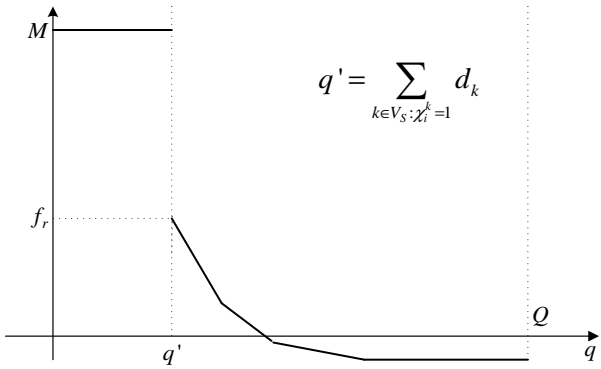}}
\end{center}
\caption{The graphic representation of $G(r, q)$ after the introduction of SMV inequalities.} \label{fig:smv}
\end{figure}

\subsection{Search strategy}
\label{sub:ss}
The branch-and-bound tree is explored according to a best-first policy; specifically, the ``best'' unexamined tree node is the one with the smallest lower bound, and would be given the highest priority. We have tested the depth-first policy in some preliminary experiments and obtained inferior results in terms of the number of the optimally solved instances within the same amount of computation time.

\subsection{Branching strategies}
\label{sub:bs}
At each branch-and-bound node, we achieve an optimal solution of the LMP using column generation procedure and separation heuristics; this solution value is a lower bound at that node. If this lower bound is not less than the current upper bound, the associated node is pruned; otherwise, branching must take place. If the optimal solution of the LMP is integral and the optimal solution value is less than the current upper bound, we update the upper bound.

As explained in \citet{Desaulniers1998}, we can hardly branch on master problem variables $\theta_{r,p}$ since fixing such variables at 0 requires preventing label-setting algorithms from generating the corresponding routes, significantly increasing the complexity of solving the pricing subproblem. Therefore, it is better to choose branching strategies compatible with the algorithms for the pricing subproblems, i.e., the pricing subproblems at the nodes resulting from such branchings could be solved in a way similar to the one used at their parent nodes. This requires that branching constraints do not change the structure of the pricing subproblem. In our branch-and-price-and-cut algorithms, we choose four types of branching strategies that have been implemented in \citet{Desaulniers2010}, namely branching on the total number of vehicles used, on the number of vehicles visiting each customer, on the total flow on each edge and on including or not including two consecutive edges in the vehicle routes.

\section{Computational Experiments}
\label{sec:ce}

\subsection{Instances}
To evaluate the branch-and-price-and-cut algorithm proposed in this paper, we conducted experiments using the data set derived from the 56 benchmark VRPTW instances of \citet{Solomon1987}, which are divided into six groups, namely R1, C1, RC1, R2, C2 and RC2. Each Solomon instance contains a designated depot and 100 customers, for a total of 101 vertices. From these 100-customer instances, we derived 25-customer and 50 customer instances by only considering the first 25 and 50 instances, respectively. For each of these instances, we consider three type of vehicle capacities, namely $Q = 30, 50$ and 100. Thus, we have 504 instances in total, which consists of 54 groups. Each group is identified by three parts separated by dashes (`--'), i.e., the name of Solomon group, the number of customers ($n$) and the vehicle capacity ($Q$). For example, instance group R1-100-30 contains all instances with $n = 100$ and $Q = 30$ generated from Solomon instance group R1, and R101-100-30 is the identifier of the first instance in this group. As did by \citet{Desaulniers2010} and \citet{Archetti2011}, the Euclidean distance between any pair of vertices was rounded off to one decimal place. We set $a =1$ and $b = Q/4$ for the weight-related cost function $f(w) = a \times w + b$. This implies that one dollar is charged per unit distance per unit weight and the vehicle weight equals one-quarter of the weight that a vehicle can carry. Note that if we set $a = 0$ and $b = 1$, the instances becomes SDVRPTW instances used in \citet{Gendreau2006,Desaulniers2010,Archetti2011}.
All instances as well as detailed experimental results are available in the online supplement to this paper at: \url{www.computational-logistics.org/orlib/scvrptwl}.

\subsection{Experimental Setup}
 Our algorithm was coded in Java and all experiments were conducted on a Dell server with an Intel Xeon E5520 2.26 GHz CPU, 8 GB RAM and Linux operating system. The linear programming models were solved by simplex algorithm implemented by ILOG CPLEX 12.0.
Computation times reported are in CPU seconds on this server.

We imposed a time limit of 3,600 seconds on each execution of the branch-and-price-and-cut algorithm. However, when the time limit is reached, we do not terminate the algorithm until it finishes processing the current branch-and-bound node. The parameters used in our heuristic column generator were fixed as: {\em maxCol} = 1000, {\em maxIter} = 25 $\times n$ and $\eta = 0.15$.

\subsection{Results on the SDVRPTW instances}
First, we applied our branch-and-price-and-cut algorithm to solve the SDVRPTW instances. At the root node of the branch-and-bound tree, we solved the linear relaxation of the problem without introducing any valid inequality using the column generation procedure. The linear relaxation results are mainly determined by the performance of the label-setting algorithm. We compare our results with the results taken from \citet{Desaulniers2010} and \citet{Archetti2011} in Table \ref{tab:1}.

The results produced by our algorithm are presented in the blocks ``New''; the column ``\# inst'' gives the number of instances in the Solomon instance group; the columns ``\# solved'' give the numbers of optimally solved instances within the time limit; and the columns ``Time'' show the average computation times. Since the experimental environments (Language C/C++, a Linux PC equipped with a Pentium D processor clocked at 2.8 GHz and CPLEX 10.1.1) used by \citet{Desaulniers2010,Archetti2011} are quite different from ours, we can not directly judge whether our results are  better than theirs. We can only say that we have achieved the optimal solutions to the linear relaxations of all SDVRPTW instances used by \citet{Desaulniers2010,Archetti2011}.

\begin{table*}[!htp]
\normalsize{\caption{Linear Relaxation Results on the SDVRPTW Instances.}
\label{tab:1}}
\begin{center}
\scalebox{0.4}{
\begin{tabular}{ccccccccccccccccccccccc}
\hline
           &            &            &                                                   \multicolumn{ 6}{c}{$Q=30$} &            &                                                   \multicolumn{ 6}{c}{$Q=50$} &            &                                                  \multicolumn{ 6}{c}{$Q=100$} \\
\cline{4 - 9} \cline{11 - 16} \cline{18 - 23}
           &    Solomon &            & \multicolumn{ 2}{c}{New } & \multicolumn{ 2}{c}{Desaulniers (2010)} & \multicolumn{ 2}{c}{Archetti et al. (2011)} &            & \multicolumn{ 2}{c}{New } & \multicolumn{ 2}{c}{Desaulniers (2010)} & \multicolumn{ 2}{c}{Archetti et al. (2011)} &            & \multicolumn{ 2}{c}{New} & \multicolumn{ 2}{c}{Desaulniers (2010)} & \multicolumn{ 2}{c}{Archetti et al. (2011)} \\
\cline{4 - 9} \cline{11 - 16} \cline{18 - 23}
         $n$ &      group &    \# inst & \# solved & Time &  \# solved &       Time &  \# solved &       Time &            &  \# solved &       Time &  \# solved &       Time &  \# solved &       Time &            &  \# solved &       Time &  \# solved &       Time &  \# solved &       Time \\
\hline
        25 &         R1 &         12 &         12 &         $<1$ &         12 &         $<1$ &         12 &         $<1$ &            &         12 &         $<1$ &         12 &         1  &         12 &         $<1$ &            &         12 &         $<1$ &         12 &         1  &         12 &         $<1$ \\

           &         C1 &          9 &          9 &         $<1$ &          9 &         $<1$ &          9 &         $<1$ &            &          9 &          1 &          9 &         1  &          9 &         $<1$ &            &          9 &          1 &          9 &         2  &          9 &         2  \\

           &        RC1 &          8 &          8 &          &          8 &         $<1$ &          8 &        $<1$ &            &          8 &          1 &          8 &         $<1$ &          8 &         $<1$ &            &          8 &          1 &          8 &         1  &          8 &         $<1$ \\

        50 &         R1 &         12 &         12 &          2 &         12 &         2  &         12 &          1 &            &         12 &          3 &         12 &         7  &         12 &          4 &            &         12 &          3 &         12 &        15  &         12 &         8  \\

           &         C1 &          9 &          9 &          2 &          9 &         2  &          9 &          1 &            &          9 &          3 &          9 &         5  &          9 &          3 &            &          9 &          4 &          9 &        21  &          9 &         9  \\

           &        RC1 &          8 &          8 &          2 &          8 &         1  &          8 &         $<1$ &            &          8 &          3 &          8 &         2  &          8 &          2 &            &          8 &          4 &          8 &        10  &          8 &         6  \\

       100 &         R1 &         12 &         12 &         19 &         12 &        31  &         12 &         20 &            &         12 &         23 &         12 &       118  &         12 &         78 &            &         12 &         44 &         12 &       597  &         12 &       384  \\

           &         C1 &          9 &          9 &         14 &          9 &         9  &          9 &          7 &            &          9 &         15 &          9 &        33  &          9 &         25 &            &          9 &         18 &          9 &       130  &          9 &        70  \\

           &        RC1 &          8 &          8 &         16 &          8 &        11  &          8 &         10 &            &          8 &         23 &          8 &        48  &          8 &         39 &            &          8 &         36 &          8 &       325  &          8 &       215  \\

        25 &         R2 &         11 &         11 &          1 &         11 &         2  &         11 &         $<1$ &            &         11 &          2 &         11 &         9  &         11 &          1 &            &         11 &          2 &         11 &        75  &         11 &         2  \\

           &         C2 &          8 &          8 &          1 &          8 &         2  &          8 &         $<1$ &            &          8 &          1 &          8 &         2  &          8 &          1 &            &          8 &          2 &          8 &         9  &          8 &         4  \\

           &        RC2 &          8 &          8 &         <1 &          8 &         1  &          8 &        $<1$ &            &          8 &          1 &          8 &         1  &          8 &         <1 &            &          8 &          2 &          8 &        14  &          8 &         1  \\

        50 &         R2 &         11 &         11 &          3 &         11 &        51  &         11 &          4 &            &         11 &          5 &         11 &       107  &         11 &         15 &            &         11 &         10 &         11 &       896  &         11 &        64  \\

           &         C2 &          8 &          8 &          3 &          8 &         7  &          8 &          2 &            &          8 &          4 &          8 &        17  &          8 &          9 &            &          8 &          7 &          8 &       161  &          8 &        46  \\

           &        RC2 &          8 &          8 &          2 &          8 &         9  &          8 &          1 &            &          8 &          3 &          8 &         8  &          8 &          5 &            &          8 &          6 &          8 &       245  &          8 &        24  \\

       100 &         R2 &         11 &         11 &         22 &         11 &     1,044  &         11 &         54 &            &         11 &         32 &         11 &     1,045  &         11 &        245 &            &         11 &        204 &          5 &       189  &          7 &       869  \\

           &         C2 &          8 &          8 &         19 &          8 &        44  &          8 &         17 &            &          8 &         23 &          8 &       127  &          8 &         91 &            &          8 &         46 &          8 &       812  &          8 &       382  \\

           &        RC2 &          8 &          8 &         18 &          8 &       264  &          8 &         36 &            &          8 &         30 &          8 &       418  &          8 &        156 &            &          8 &        101 &          6 &     1,037  &          8 &     1,130  \\
\hline
\end{tabular}
}
\end{center}
\end{table*}

From Table \ref{tab:1}, we find that the algorithm proposed in \citet{Archetti2011} failed to optimally solve the linear relaxation of four instances in group R2-100-100. However, this article does not reveal the names of these four instances. So we present in Table \ref{tab:2} the optimal linear relaxation values (LP) and associated computation times (LP time) of all instances in group R2-100-100.

\begin{table}[!htp]
\normalsize{\caption{Linear Relaxation Results on the SDVRPTW Instances in Group R2-100-100.}
\label{tab:2}}
\begin{center}
\scalebox{0.9}{
\begin{tabular}{ccccc}
\hline
  Instance & $n$ &  $Q$ & LP  &    LP time \\
\hline
      R201 &        100 &        100 &     734.4  &      1.1  \\

      R202 &        100 &        100 &     726.2  &      1.1  \\

      R203 &        100 &        100 &     723.0  &      1.3  \\

      R204 &        100 &        100 &     723.0  &      1.3  \\

      R205 &        100 &        100 &     728.6  &      1.1  \\

      R206 &        100 &        100 &     725.8  &      1.1  \\

      R207 &        100 &        100 &     723.0  &      1.3  \\

      R208 &        100 &        100 &     723.0  &      1.3  \\

      R209 &        100 &        100 &     723.0  &      1.1  \\

      R210 &        100 &        100 &     727.9  &      1.1  \\

      R211 &        100 &        100 &     723.0  &      1.4  \\
\hline
\end{tabular}
}
\end{center}
\end{table}

During the experiments, we found that some instances violate the triangle inequality due to rounding the distance to one decimal place. Consequently, the optimal solutions reported in \citet{Desaulniers2010,Archetti2011} for some SDVRPTW instances are not truly optimal. To resolve this issue, we applied a shortest path algorithm to update the distance matrix and make it satisfy the triangle inequality. Then, we solved all SDVRPTW instances again using our algorithm. The optimal solution values of the 262 SDVRPTW instances obtained by previous articles can be found at: \\ \url{http://www.gerad.ca/~guyd/sdvrptw.html}.
Based on our computational results, we divided the instances into five categories:
\begin{description}
\item {\em Category} 1: The optimal solution values are smaller than those reported in \citet{Archetti2011};
\item {\em Category} 2: The optimal solutions are not reported in \citet{Archetti2011} but have been found by our algorithm;
\item {\em Category} 3: The optimal solutions were reported in \citet{Archetti2011} but have not been found by our algorithm;
\item {\em Category} 4: The optimal solution values are the same as those reported in \citet{Archetti2011};
\item {\em Category} 5: The optimal solutions have not been found by any algorithm.
\end{description}
Our algorithm achieved optimal solutions for 264 out of 504 SDVRPTW instances. Table \ref{tab:3} presents the detailed integer results for the 52 instances belonging to Categories 1 and 2, including the number of vehicles used (\# vehicles), the number of split customers (\# splits), the number of branch-and-bound nodes (\# nodes), the number of added cuts (\# cuts), the integer solution value (IP) and the consumed computation time (Time). The instances contained in Category 3 are R102-50-50, C103-50-100, C202-50-100, C205-50-100, C102-100-100, C205-100-100 and C206-100-100.

\begin{table*}[!htp]
\normalsize{\caption{Integer Solution Results for the SDVRPTW Instances in Categories 1 and 2.}
\label{tab:3}}
\begin{center}
\scalebox{0.5}{
\begin{tabular}{rccccccccc}
\hline
   &   Instance & $n$ &   $Q$ &  \# vehicles &    \# splits &    \# nodes &     \# cuts &         IP &       Time \\
\hline
\multirow{44}[0]{*}{Category 1}   &       C201 &         25 &         30 &         16 &          7 &         1  &        89  &     909.8  &      2.1  \\

                &       C202 &         25 &         30 &         16 &          7 &          1 &       136  &     909.8  &      2.8  \\

                &       C203 &         25 &         30 &         16 &          7 &         10 &       114  &     909.8  &      8.6  \\

                &       C204 &         25 &         30 &         16 &          7 &          8 &        63  &     909.8  &     10.2  \\

                &       C205 &         25 &         30 &         16 &          7 &          9 &       102  &     909.8  &     16.2  \\

                &       C206 &         25 &         30 &         16 &          6 &          1 &        52  &     909.8  &      2.3  \\

                &       C207 &         25 &         30 &         16 &          7 &          2 &        55  &     909.8  &      3.2  \\

                &       C201 &         25 &         50 &         10 &          3 &          9 &        40  &     601.2  &      4.8  \\

                &       C205 &         25 &         50 &         10 &          3 &         23 &        59  &     601.0  &      8.1  \\

                &       C206 &         25 &         50 &         10 &          2 &         27 &        49  &     601.0  &      9.4  \\

                &       C208 &         25 &         50 &         10 &          3 &         25 &        62  &     601.0  &     10.2  \\

                &      RC201 &         25 &         50 &         11 &          1 &        33  &       130  &     940.6  &     17.1  \\

                &       C101 &         25 &        100 &          5 &          0 &         1  &        12  &     291.8  &      1.6  \\

                &       C102 &         25 &        100 &          5 &          0 &         7  &        17  &     291.8  &      3.9  \\

              &       C105 &         25 &        100 &          5 &          0 &         1  &        11  &     291.8  &      3.4  \\

              &       C106 &         25 &        100 &          5 &          0 &         1  &        12  &     291.8  &      1.8  \\

              &       C107 &         25 &        100 &          5 &          0 &         1  &        13  &     291.8  &      1.8  \\

              &       C108 &         25 &        100 &          5 &          0 &         9  &        20  &     291.8  &      4.4  \\

              &       C201 &         25 &        100 &          5 &          1 &         1  &         5  &     363.5  &      1.6  \\

             &       C206 &         25 &        100 &          5 &          0 &         1  &        72  &     359.9  &      3.4  \\

            &       C207 &         25 &        100 &          5 &          1 &        17  &        37  &     358.7  &     13.8  \\

             &       C208 &         25 &        100 &          5 &          1 &         9  &        72  &     358.7  &     11.4  \\

              &      RC201 &         25 &        100 &          6 &          0 &         1  &         4  &     534.0  &      1.3  \\

             &      RC202 &         25 &        100 &          6 &          0 &         5  &        10  &     526.2  &      2.4  \\

              &       R101 &         50 &         50 &         15 &          3 &        61  &        63  &   1,190.7  &     31.9  \\

               &       C108 &         50 &         50 &         18 &          6 &         1  &       248  &   1,011.8  &     18.2  \\

               &       C201 &         50 &         50 &         18 &          8 &        15  &       200  &   1,159.4  &     99.4  \\

             &       C202 &         50 &         50 &         18 &          8 &         1  &       260  &   1,156.9  &     19.9  \\

           &       C203 &         50 &         50 &         18 &          8 &        15  &       273  &   1,156.9  &    268.2  \\

          &       C205 &         50 &         50 &         18 &          9 &         1  &       592  &   1,156.9  &     55.8  \\

         &       C206 &         50 &         50 &         18 &          8 &         5  &       558  &   1,156.9  &    313.1  \\

            &       C207 &         50 &         50 &         18 &          8 &        19  &       232  &   1,156.9  &    399.8  \\

             &       C208 &         50 &         50 &         18 &          9 &         5  &     1,076  &   1,156.9  &    483.2  \\

             &       R101 &         50 &        100 &         12 &          0 &         1  &         0  &   1,043.8  &      1.1  \\

             &       R105 &         50 &        100 &          9 &          0 &       316  &        39  &     918.1  &    173.2  \\

            &       R109 &         50 &        100 &          8 &          1 &     1,267  &       176  &     804.1  &  2,460.4  \\

               &       C101 &         50 &        100 &          9 &          0 &       111  &       451  &     587.5  &  1,487.4  \\

             &       C102 &         50 &        100 &          9 &          1 &        37  &       425  &     584.6  &  1,061.9  \\

              &       C105 &         50 &        100 &          9 &          2 &       115  &       385  &     587.5  &  1,901.9  \\

             &       C106 &         50 &        100 &          9 &          3 &       181  &       412  &     587.5  &  2,175.3  \\

             &       C107 &         50 &        100 &          9 &          3 &       201  &       139  &     587.5  &  1,710.7  \\

                &       C108 &         50 &        100 &          9 &          1 &         5  &       154  &     584.0  &     80.9  \\

              &       R205 &         50 &        100 &          8 &          1 &        49  &       120  &     758.8  &    418.1  \\

                &       R101 &        100 &        100 &         20 &          0 &        33  &         7  &   1,638.4  &     89.0  \\
\hline
\multirow{8}[0]{*}{Category 2}  &       C101 &         50 &         30 &         29 &         10 &         2  &       850  &   1,599.5  &     37.0  \\

                &       C102 &         50 &         30 &         29 &         10 &         1  &     2,689  &   1,599.5  &    260.3  \\

               &       C105 &         50 &         30 &         29 &          7 &         1  &     1,214  &   1,599.5  &     76.7  \\

               &       C106 &         50 &         30 &         29 &          7 &         1  &       683  &   1,599.5  &     21.1  \\

              &       C107 &         50 &         30 &         29 &          9 &         1  &       767  &   1,599.5  &     35.5  \\

            &       C108 &         50 &         30 &         29 &         10 &         2  &     1,547  &   1,598.3  &    137.7  \\

             &       C204 &         50 &         50 &         18 &          9 &         5  &       277  &   1,156.9  &    213.3  \\

               &       R201 &         50 &        100 &          8 &          0 &     1,557  &       231  &     843.0  &  2,709.8  \\
\hline
\end{tabular}
}
\end{center}
\end{table*}

\subsection{Results on the SCVRPTWL instances}
Next, we tried to solve all SCVRPTWL instances to optimality using our branch-and-price-and-cut algorithm. These instances use the updated distance matrix that satieties the triangle inequality. At the beginning of the algorithm, we solved the linear relaxation of the problem that contains all $n$ SMV inequalities and does not consider any $k$-path inequality. The linear relaxation results are reported in Table \ref{tab:scvrptwl-lp}, which show that a lower bound for each instance was achieved within the time limit.

\begin{table}[!htp]
\normalsize{\caption{Linear Relaxation Results on the SCVRPTWL Instances.}
\label{tab:scvrptwl-lp}}
\begin{center}
\scalebox{0.6}{
\begin{tabular}{ccccccccccc}
\hline
           &            &            & \multicolumn{ 2}{c}{Q = 30} &            & \multicolumn{ 2}{c}{Q = 50} &            & \multicolumn{ 2}{c}{Q = 100} \\
\cline{4-5} \cline{7-8} \cline{10-11}
         $n$ & Solomon group &    \# inst &  \# solved &       Time &            &  \# solved &       Time &            &  \# solved &       Time \\
\hline
        25 &         R1 &         12 &         12 &       0.6  &            &         12 &       0.8  &            &         12 &       0.8  \\

           &         C1 &          9 &          9 &       0.7  &            &          9 &       0.9  &            &          9 &       1.6  \\

           &        RC1 &          8 &          8 &       0.6  &            &          8 &       0.7  &            &          8 &       1.5  \\

        50 &         R1 &         12 &         12 &       2.0  &            &         12 &       3.0  &            &         12 &       3.9  \\

           &         C1 &          9 &          9 &       1.7  &            &          9 &       2.2  &            &          9 &       4.2  \\

           &        RC1 &          8 &          8 &       1.3  &            &          8 &       2.0  &            &          8 &       4.0  \\

       100 &         R1 &         12 &         12 &      14.9  &            &         12 &      23.3  &            &         12 &      65.0  \\

           &         C1 &          9 &          9 &       6.2  &            &          9 &       8.6  &            &          9 &      21.7  \\

           &        RC1 &          8 &          8 &       9.1  &            &          8 &      13.8  &            &          8 &      59.2  \\

        25 &         R2 &         11 &         11 &       0.8  &            &         11 &       1.1  &            &         11 &       1.4  \\

           &         C2 &          8 &          8 &       0.7  &            &          8 &       0.8  &            &          8 &       1.8  \\

           &        RC2 &          8 &          8 &       0.6  &            &          8 &       0.9  &            &          8 &       1.8  \\

        50 &         R2 &         11 &         11 &       2.6  &            &         11 &       4.7  &            &         11 &       8.9  \\

           &         C2 &          8 &          8 &       1.5  &            &          8 &       2.1  &            &          8 &       5.7  \\

           &        RC2 &          8 &          8 &       1.5  &            &          8 &       2.4  &            &          8 &       5.4  \\

       100 &         R2 &         11 &         11 &      21.3  &            &         11 &      41.1  &            &         11 &     168.9  \\

           &         C2 &          8 &          8 &       6.2  &            &          8 &      10.3  &            &          8 &      35.7  \\

           &        RC2 &          8 &          8 &      12.8  &            &          8 &     190.4  &            &          8 &     476.7  \\
\hline
\end{tabular}
}
\end{center}
\end{table}

Table \ref{tab:ip} presents a summary of the integer solution results of the SCVRPTWL instances. All columns except the first three columns give the average values over the optimally solved instances. We denote by LP and LPC the optimal values of the linear relaxations (at the root node) with and without the $k$-path inequalities, respectively. The value of ``LP gap (\%)'' (respectively, ``LPC gap (\%)'') for each solved instance was calculated by (IP $-$ LP)/IP (respectively, (IP $-$ LPC)/IP), where IP represents the optimal integer solution value. The average times used to produce LP, LPC and IP are reported in the columns ``LP time'', ``LPC time'' and ``IP time'', respectively.

\begin{table*}[!htp]
\normalsize{\caption{Summary of the Integer Solution Results to the SCVRPTWL Instances.}
\label{tab:ip}}
\begin{center}
\scalebox{0.6}{
\begin{tabular}{cccccccccccc}
\hline
Instance group &     \# inst &   \# solved & \# vehicles &   \# splits & LP gap (\%) &    LP time & LPC gap (\%) &   LPC time & IP time &    \# nodes &     \# cuts \\
\hline
  R1-25-30 &         12 &         12 &      13.4  &       1.8  &      0.39  &       0.6  &      0.13  &       1.4  &     117.2  &     356.5  &      64.2  \\

  R1-25-50 &         12 &         12 &       9.4  &       0.9  &      0.33  &       0.8  &      0.20  &       1.0  &       6.6  &      19.8  &       4.3  \\

 R1-25-100 &         12 &         12 &       7.3  &       0.4  &      0.10  &       0.8  &      0.10  &       0.8  &       1.1  &       1.7  &       0.0  \\

  C1-25-50 &          9 &          6 &      10.0  &       1.8  &      1.53  &       0.8  &      0.13  &       1.9  &      49.3  &     107.3  &      46.2  \\

 C1-25-100 &          9 &          9 &       5.0  &       0.0  &      0.38  &       1.6  &      0.08  &       2.5  &       5.2  &       3.2  &       8.1  \\

 RC1-25-50 &          8 &          8 &      12.0  &       1.1  &      2.39  &       0.7  &      0.24  &       1.8  &    1122.7  &    4144.9  &     218.4  \\

RC1-25-100 &          8 &          8 &       6.0  &       0.5  &      0.46  &       1.5  &      0.25  &       1.8  &       8.1  &      12.8  &       5.4  \\

  R2-25-30 &         11 &         11 &      13.2  &       2.2  &      0.54  &       0.8  &      0.18  &       1.7  &     284.3  &     308.8  &     107.9  \\

  R2-25-50 &         11 &         11 &       9.0  &       0.7  &      0.35  &       1.1  &      0.21  &       1.5  &      46.5  &      83.2  &       7.7  \\

 R2-25-100 &         11 &         11 &       6.4  &       0.7  &      0.04  &       1.4  &      0.04  &       1.4  &       5.3  &       4.8  &       0.0  \\

  C2-25-30 &          8 &          1 &      16.0  &       6.0  &      0.70  &       0.6  &      0.00  &       1.3  &       2.1  &       4.0  &      24.0  \\

  C2-25-50 &          8 &          8 &      10.0  &       1.1  &      0.24  &       0.8  &      0.00  &       1.4  &       1.4  &       1.0  &      20.6  \\

 C2-25-100 &          8 &          8 &       6.1  &       0.8  &      0.76  &       1.8  &      0.21  &       2.9  &     173.8  &      90.3  &      30.5  \\

 RC2-25-50 &          8 &          5 &      12.0  &       1.0  &      2.24  &       0.8  &      0.13  &       2.0  &     323.4  &     722.6  &      88.2  \\

RC2-25-100 &          8 &          8 &       6.0  &       1.0  &      0.50  &       1.8  &      0.27  &       2.2  &      38.3  &      15.9  &       6.3  \\

  R1-50-50 &         12 &          2 &      20.5  &       2.5  &      0.44  &       1.6  &      0.09  &       3.1  &     156.8  &      61.0  &      20.0  \\

 R1-50-100 &         12 &         10 &      12.7  &       0.5  &      0.10  &       4.0  &      0.08  &       5.2  &     219.0  &      26.4  &       4.2  \\

 C1-50-100 &          9 &          8 &      10.0  &       1.6  &      1.29  &       4.1  &      0.13  &       9.6  &     701.0  &      95.0  &      33.4  \\

 RC1-50-50 &          8 &          8 &      20.0  &       6.5  &      0.24  &       2.0  &      0.04  &       3.3  &     676.1  &     639.5  &      74.8  \\

RC1-50-100 &          8 &          5 &      10.2  &       1.6  &      0.53  &       4.3  &      0.11  &       7.5  &     479.3  &     171.4  &      13.8  \\

 R2-50-100 &         11 &          9 &      11.2  &       1.7  &      0.15  &       9.3  &      0.10  &      21.3  &     660.0  &      49.9  &      11.1  \\

 C2-50-100 &          8 &          1 &      10.0  &       4.0  &      0.95  &       5.6  &      0.03  &      11.1  &     153.3  &       5.0  &      60.0  \\

 RC2-50-50 &          8 &          8 &      20.0  &       6.1  &      0.24  &       2.4  &      0.03  &       4.1  &     640.0  &     413.4  &      55.5  \\

RC2-50-100 &          8 &          6 &      10.0  &       1.8  &      0.48  &       6.1  &      0.08  &      12.3  &      52.3  &       8.0  &       8.8  \\

R2-100-100 &         11 &          1 &      22.0  &       5.0  &      0.00  &      27.8  &      0.00  &      27.9  &      28.0  &       1.0  &       0.0  \\
\hline
\end{tabular}
}
\end{center}
\end{table*}

Our algorithm optimally solved 188 out of 504 SCVRPTWL instances, where 130, 57 and 1 instances contain 25, 50 and 100 customers, respectively. Further, we summarized the information on the number of solved instances and the number of split customers in Tables \ref{tab:solved} and \ref{tab:splits}. From Table \ref{tab:solved}, we can see that 48, 23, 29, 43, 18, 27 solved instances were derived from Solomon groups R1, C1, RC1, R2, C2 and RC2, respectively. Moreover, this table also clearly shows that the instances with greater $Q$ are easier to be solved. Table \ref{tab:splits} implies that with the increase of the vehicle capacity $Q$, the possibility of splitting customers becomes smaller. The detailed integer solution results for all optimally solved instances are given in Appendix C.
\begin{table}[!htp]
\normalsize{\caption{Summary on the Number of Solved Instances.}
\label{tab:solved}}
\begin{center}
\scalebox{0.6}{
\begin{tabular}{cccccc}
\hline
Solomon group &          $n$ &       $Q$ = 30 &       $Q$ = 50 &      $Q$ = 100 &        Sum \\
\hline
        R1 &         25 &         12 &         12 &         12 &         36 \\

           &         50 &          0 &          2 &         10 &         12 \\

           &        100 &          0 &          0 &          0 &          0 \\

        C1 &         25 &          0 &          6 &          9 &         15 \\

           &         50 &          0 &          0 &          8 &          8 \\

           &        100 &          0 &          0 &          0 &          0 \\

       RC1 &         25 &          0 &          8 &          8 &         16 \\

           &         50 &          0 &          8 &          5 &         13 \\

           &        100 &          0 &          0 &         0  &          0 \\

        R2 &         25 &         11 &         11 &         11 &         33 \\

           &         50 &          0 &          0 &          9 &          9 \\

           &        100 &          0 &          0 &          1 &          1 \\

        C2 &         25 &          1 &          8 &          8 &         17 \\

           &         50 &          0 &          0 &          1 &          1 \\

           &        100 &          0 &          0 &          0 &          0 \\

       RC2 &         25 &          0 &          5 &          8 &         13 \\

           &         50 &          0 &          8 &          6 &         14 \\

           &        100 &          0 &          0 &          0 &          0 \\
\hline
       Sum &            &        24  &        68  &        96  &            \\
\hline
\end{tabular}
}
\end{center}
\end{table}

\begin{table}[!htp]
\normalsize{\caption{Summary on the Number of Split Customers.}
\label{tab:splits}}
\begin{center}
\scalebox{0.6}{
\begin{tabular}{ccccc}
\hline
Solomon group &          $n$ &       $Q$ = 30 &       $Q$ = 50 &      $Q$ = 100 \\
\hline
        R1 &        25  &       1.8  &       0.9  &       0.4  \\

           &        50  &          -- &       2.5  &       0.5  \\

           &       100  &          -- &         --&       -- \\

        C1 &        25  &          -- &       1.8  &       0.0  \\

           &        50  &         -- &         -- &       1.6  \\

           &       100  &          -- &         -- &        -- \\

       RC1 &        25  &          -- &       1.1  &       0.5  \\

           &        50  &        -- &       6.5  &       1.6  \\

           &       100  &         -- &         -- &         -- \\

        R2 &        25  &       2.2  &       0.7  &       0.7  \\

           &        50  &          -- &          -- &       1.7  \\

           &       100  &          -- &          -- &       5.0  \\

        C2 &        25  &       6.0  &       1.1  &       0.8  \\

           &        50  &          -- &         -- &       4.0  \\

           &       100  &          --&         -- &         -- \\

       RC2 &        25  &          -- &       1.0  &       1.0  \\

           &        50  &         -- &       6.1  &       1.8  \\

           &       100  &          -- &          -- &         -- \\
\hline
\end{tabular}
}
\end{center}
\end{table}

\section{Conclusions}
\label{sec:c}
This paper introduces a new extension of the SDVRPTW in which the travel cost per unit distance is charged based on a linear function of the vehicle weight; this extension is called the split-collection vehicle routing problem with time windows and linear weight-related cost (SCVRPTWL). We devised an exact branch-and-price-and-cut algorithm to solve the problem, where the pricing subproblem is a resource-constrained elementary least-cost path problem. The effectiveness of the branch-and-price-and-cut algorithm heavily relies on the method for solving the pricing subproblem. We observed that at least one of the optimal solutions to the pricing subproblem must correspond to an extreme collection pattern; this help us reduce the feasible region significantly. To solve this new type of pricing subproblem, we designed a tailored and novel label-setting algorithm that integrates specific labels and dominance rules. We applied our branch-and-price-and-cut algorithm to solve the instances of both the SDVRPTW and SDVRPTWL.

The reported computational results reveal that our algorithm achieved optimal solutions for 264 SDVRPTW instances and 188 SDVRPTWL instances within one hour of computation time. The existing best exact algorithm, namely the enhanced branch-and-price-and-cut algorithm proposed by \citet{Archetti2011}, only produced optimal solutions for 262 SDVRPTW instances. Since the SDVRPTWL is a new problem and has not been tackled by any exiting algorithm, the experiments and analysis presented in this study serves as benchmarks for future researchers.

Since our branch-and-price-and-cut algorithm only optimally solved around one-third of the total benchmark SDVRPTWL instances, there is much space to improve the solution procedure. Furthermore, we may investigate other vehicle routing models that incorporates the linear weight-related cost or other types of cost functions, e.g., piecewise linear function of the vehicle weight.


\appendix

\section{Implementation Details of Set Dominance Rule}
We eliminated the dominated labels ending at vertex $i$ by maintaining a directed dominance graph $\mathbb{G}_i = (\mathbb{N}_i, \mathbb{A}_i)$. For the rest of this section, we distinguish between the terms  {\em vertex} and {\em node}, which are usually considered the same and are used interchangeably; we specify that {\em vertex} refers to the vertex in the underlying graph $G$ of the SCVRPTWL, and {\em node} refers to the node in the dominance graph $\mathbb{G}_i$.

Each node $u \in \mathbb{N}_i$ includes a set $L_{u}$ of non-dominated labels that end at vertex $i$ with the same $\tau_i$ and $V_i$ and has three attributes: the earliest service starting time $\tau_i(L_u)$, the set $V_i(L_u)$ of reachable vertices and the minimum reduced cost function $G_{min}(L_u, q) = \min_{E_i^x \in L_u}\{G^x(r^x, q)\}$ for $q \in [0, Q]$. The directed edge $(u, v)$ is included in the edge set $\mathbb{A}_i$ if there does not exist other paths from node $u$ to node $v$ and at least one of the following conditions holds:
\begin{enumerate}
 \item $\tau_i(L_u) \leq \tau_i(L_v)$ and $V_i(L_u) \supset V_i(L_v)$;
 \item $\tau_i(L_u) < \tau_i(L_v)$ and $V_i(L_u) \supseteq V_i(L_v)$.
 \end{enumerate}
We update $G_{min}(L_v, q) = \min\{G_{min}(L_u, q), G_{min}(L_v, q)\}$ with the creation of edge $(u, v)$. In this dominance graph, there must exist a root node 0 (a node that does not have incoming edges), which corresponds to the two-vertex partial route $r =(0, i)$. An example of the dominance graph is given in Figure \ref{fig:ex3}. Starting from the root node 0, we update $\mathbb{G}_i$ by invoking Algorithm \ref{alg:updateG} every time a new label ending at vertex $i$ is created. After performing this algorithm, the labels included in $\mathbb{G}_i$ are all currently non-dominated.
\begin{figure}[!th]
\begin{center}
\resizebox{8cm}{!}{\includegraphics{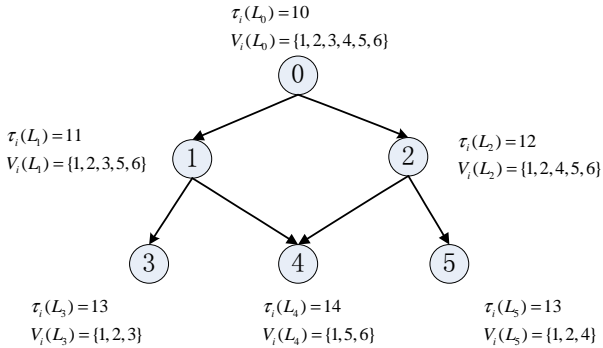}}
\end{center}
\caption{An example of the dominance graph $\mathbb{G}_i$.} \label{fig:ex3}
\end{figure}

\begin{algorithm}[!h]
\caption{The Process of Updating Graph $\mathbb{G}_i$.} \label{alg:updateG}
\begin{small}
\begin{algorithmic}[1]
\STATE INPUT: the current graph $\mathbb{G}_i$;
\STATE Check whether label $E^x_i$ is dominated; \label{alg:line:checkdom}
\IF{$E_i^x$ is not dominated}
\STATE Insert $E_i^x$ into the label set of a certain existing node or create a new node whose label set contains only $E_i^x$ \label{alg:insert};
\STATE Remove from $\mathbb{G}_i$ the previously inserted labels that become dominated after the adding of $E_i^x$ \label{alg:removeL};
\ELSE
\STATE Discard $E_i^x$ \label{alg:discard}.
\ENDIF
\end{algorithmic}
\end{small}
\end{algorithm}

We check whether a label $E^x_i = (\tau^x_i, N^x_i, V^x_i, G^x(r, q))$ is dominated (line \ref{alg:line:checkdom} in Algorithm \ref{alg:updateG}) by performing a recursive procedure {\em DominanceCheck}($u$, $E^x_i, \mathbb{G}_i$ ) shown in Algorithm \ref{alg:dominancecheck}. Given a node $u \in \mathbb{G}_i$, if there exists a child $v$ of $u$ that satisfies condition 1 or 2 (or both), the procedure moves to node $v$ since $G_{min}(L_v, q)$ has more chance to lie below $G^x(r^x, q)$. Otherwise, the procedure checks whether $E^x_i$ is a dominated label, i.e., whether $G_{min}(L_u, q) \leq G^x(r^x, q)$ holds for each $q \in [0, Q]$.
\begin{algorithm}[!h]
\caption{{\em DominanceCheck}($u, E^x_i, \mathbb{G}_i$).} \label{alg:dominancecheck}
\begin{small}
\begin{algorithmic}[1]
\STATE {\em flag} $\leftarrow$ false;
\FOR{each child $v$ of $u$ that has not been examined}
\IF{$\tau_i(L_v) \leq \tau^x_i, V_i(L_v) \supset V^x_i$ or $\tau_i(L_v) < \tau^x_i, V_i(L_v) \supseteq V^x_i$ }
\STATE {\em flag} $\leftarrow$ true;
\STATE {\em result} $\leftarrow$ {\em DominanceCheck}($v, E^x_i, \mathbb{G}_i$);
\IF{{\em result} = true}
\RETURN true;
\ENDIF
\ENDIF
\ENDFOR
\IF{{\em flag} = false}
\IF{$G_{min}(L_u, q) \leq G^x(r^x, q)$ for each $q \in [0, Q]$}
\RETURN true;
\ELSE
\RETURN false;
\ENDIF
\ENDIF
\RETURN false.
\end{algorithmic}
\end{small}
\end{algorithm}

If $E^x_i$ is a dominated label, we discard it (line \ref{alg:discard} in Algorithm \ref{alg:updateG}); otherwise, we need to add it in $\mathbb{G}_i$. We either insert the non-dominated $E^x_i$ into $L_u$ if $\tau_i(L_u) = \tau_i^x$ and $V_i(L_u) = V_i^x$ or create a new node whose label set contains only $E^x_i$ if no such $u$ exists (line \ref{alg:insert} in Algorithm \ref{alg:updateG}). The function {\em SearchNode}($u$, $E^x_i$, $\mathbb{G}_i$) presented in Algorithm \ref{alg:search} is used to check whether $\mathbb{G}_i$ contains a node $u$ with $\tau_i(L_u) = \tau_i^x$ and $V_i(L_u) = V_i^x$. After inserting $E^x_i$ into $L_u$, we update $G_{min}(L_u, q) = \min\{G_{min}(L_u, q), G^x(r^x, q)\}$ for all $q \in [0, Q]$. The new node is created and connected to $\mathbb{G}_i$ by invoking the function {\em CreateNode}($u$, $E_i^x$, $\mathbb{G}_i$ ) shown in Algorithm \ref{alg:create}. In this algorithm, we first create a node $w$ and then connect it to $\mathbb{G}_i$. If a node $u$ satisfies $\tau_i(L_u) \leq \tau_i(L_w)$ and $V_i(L_u) \supseteq V_i(L_w)$, and none of its child has this relationship, we create edge $(u, w)$. Then, for each child $v$ of $u$, if $\tau_i(L_v) \geq \tau_i(L_w)$ and $V_i(L_v) \subseteq V_i(L_w)$, we remove edge $(u, v)$ and create edge $(w, v)$. The newly created node $w$ may have multiple immediate predecessors and successors.

\begin{algorithm}[!h]
\caption{{\em SearchNode}($u$, $E^x_i$, $\mathbb{G}_i$).} \label{alg:search}
\begin{small}
\begin{algorithmic}[1]
\IF{$\tau_i(L_u) = \tau_i^x$ and $V_i(L_u) = V_i^x$}
\RETURN $u$;
\ENDIF
\FOR{Each child $v$ of $u$ that has not been examined}
\IF{$\tau_i(L_v) \leq \tau_i^x$ and $V_i(L_v) \supseteq V_i^x$}
\RETURN {\em SearchNode}($v$, $E^x_i$, $\mathbb{G}_i$);
\ENDIF
\ENDFOR
\RETURN null.
\end{algorithmic}
\end{small}
\end{algorithm}

\begin{algorithm}[!h]
\caption{{\em CreateNode}($u$, $E^x_i$, $\mathbb{G}_i$).} \label{alg:create}
\begin{small}
\begin{algorithmic}[1]
\STATE {\em flag} $\leftarrow$ false;
\STATE Create a node $w$ with $L_w=\{E^x_i\}$, $\tau_i(L_w) = \tau_i^x$, $V_i(L_w) = V_i^x$ and $G_{min}(L_w, q) = G^x(r^x, q)$;
\FOR{each child $v$ of $u$ that has not been examined}
\IF{$\tau_i(L_v) \leq \tau_i^x$ and $V_i(L_v) \supseteq V_i^x$ }
\STATE  {\em flag} $\leftarrow$ true;
\STATE {\em CreateNode}($v$, $E^x_i$, $\mathbb{G}_i$);
\ENDIF
\ENDFOR
\IF{{\em flag} = false}
\STATE Create edge $(u, w)$ and update $G_{min}(L_w, q) = \min\{G_{min}(L_u, q), G_{min}(L_w, q)\}$ for all $q \in [0, Q]$;
\FOR{each child $v$ of $u$}
\IF{$\tau_i(L_v) \geq \tau_i(L_w)$ and $V_i(L_v) \subseteq V_i(L_w)$}
\STATE Remove edge $(u, v)$ and create edge $(w, v)$.
\ENDIF
\ENDFOR
\ENDIF
\end{algorithmic}
\end{small}
\end{algorithm}

The adding of label $E^x_i$ in graph $\mathbb{G}_i$ necessitates the improvement of the minimal reduced cost function at all successor nodes. Moreover, some labels in $\mathbb{G}_i$ may become dominated due to the adjustment of the minimal reduced cost function and therefore can be removed (line \ref{alg:removeL} in Algorithm \ref{alg:updateG}). To improve the minimal reduced cost function and remove the dominated labels, we invoke a function {\em RemoveLabel}($u$, $\mathbb{G}_i$), which is shown in Algorithm \ref{alg:remove}.

\begin{algorithm}[!h]
\caption{{\em RemoveLabel}($u$, $\mathbb{G}_i$).} \label{alg:remove}
\begin{small}
\begin{algorithmic}[1]
\STATE {\em flag} $\leftarrow$ false;
\FOR{each child $v$ of $u$ that has not been examined}
\FOR{each label $E^x_i \in L_v$}
\IF{$G_{min}(L_u, q) \leq G^x(r^x, q)$ for all $q \in [0, Q]$}
\STATE Remove $E^x_i$ from $L_v$;
\ENDIF
\ENDFOR
\IF{$L_v = \emptyset$}
\STATE Create an edge from $u$ to each child of $v$;
\STATE Remove $v$ from $\mathbb{G}_i$;
\ELSE
\STATE $G_{min}(L_v, q) = \min{G_{min}(L_u, q), G_{min}(L_v, q)}$;
\ENDIF
\ENDFOR
\end{algorithmic}
\end{small}
\end{algorithm}

\section{The Detailed Derivation of $G^b(r, Q - \hat{q})$}
Given a feasible partial backward route $r = (v(1), v(2), \ldots, v(|r|))$, where $|r|\geq 2$ and $v(|r|) = n+1$, and the incoming flow $\hat{q}$, the reduced cost $G^b(r, Q - \hat{q})$ can be computed by the following model:
\begin{align}
G^b(r, Q - \hat{q}) =&\min \sum_{i = 1}^{|r|-1} c_{v(i), v(i+1)}\Big{(}a(\hat{q} + \sum_{j = 1}^i\delta_{v(j)}) + b \Big{)} \nonumber\\
&- \sum_{i = 1}^{|r|}\Big{(}\delta_{v(i)}\pi_{v(i)} + \mu_{v(i)}\Big{)}\\
\mbox{s.t.}~~ &\sum_{i = 1}^{|r|}\delta_{v(i)} = Q - \hat{q} \label{ahb:c1}\\
&0 \leq \delta_{v(i)} \leq d_{v(i)}, ~\forall~ 1 \leq  i \leq |r| \label{ahb:c2}
\end{align}

The objective can be rewritten as:
\begin{small}
\begin{align}
G^b(r, Q - \hat{q}) = & a\hat{q}\sum_{i = 1}^{|r| -1}c_{v(i), v(i+1)} + \sum_{i = 1}^{|r| -1}a\delta_{v(i)}\sum_{j = i}^{|r| -1}c_{v(j), v(j+1)}  \nonumber\\
&+ \sum_{i = 1}^{|r|-1}bc_{v(i), v(i+1)} - \sum_{i = 1}^{|r|}\bigg{(}\delta_{v(i)}\pi_{v(i)} + \mu_{v(i)}\bigg{)} \nonumber\\
= &a \bigg{(}Q - \sum_{i = 1}^{|r|}\delta_{v(i)}\bigg{)}\sum_{i = 1}^{|r| -1}c_{v(i), v(i+1)} + \sum_{i = 1}^{|r| -1}a\delta_{v(i)}\sum_{j = i}^{|r| -1}c_{v(j), v(j+1)}   \nonumber\\
&+ \sum_{i = 1}^{|r|-1}bc_{v(i), v(i+1)} - \sum_{i = 1}^{|r|}\bigg{(}\delta_{v(i)}\pi_{v(i)} + \mu_{v(i)}\bigg{)} \nonumber\\
=& aQ\sum_{i = 1}^{|r| - 1}c_{v(i), v(i+1)} - \sum_{i = 1}^{|r|} a\delta_{v(i)}\sum_{i = 1}^{|r| - 1}c_{v(i), v(i+1)} - \sum_{i = 1}^{|r|}\mu_{v(i)} \nonumber \\
& + \sum_{i = 1}^{|r| - 1}a\delta_{v(i)}\sum_{j = i}^{|r| -1}c_{v(j), v(j+1)}
+\sum_{i = 1}^{|r| - 1}bc_{v(i), v(i+1)} - \sum_{i = 1}^{|r|}\pi_{v(i)}\delta_{v(i)}\nonumber\\
= & aQ\sum_{i = 1}^{|r|-1}c_{v(i), v(i+1)} + \sum_{i = 1}^{|r|-1}bc_{v(i), v(i+1)}  \nonumber \\
&  - \sum_{i = 1}^{|r|}\mu_{v(i)} - \delta_{v(|r|)}a\sum_{i = 1}^{|r|-1}c_{v(i), v(i+1)} \nonumber\\
&- \sum_{i = 1}^{|r| -1}\delta_{v(i)}\bigg{(}a\sum_{i = 1}^{|r| -1}c_{v(i), v(i+1)} + \pi_{v(i)} - a\sum_{j =i}^{|r| -1}c_{v(j), v(j+1)}\bigg{)} \nonumber\\
=&\sum_{i = 1}^{|r| -1}\bigg{(}(aQ + b)c_{v(i), v(i+1)} - \mu_{v(i)}\bigg{)} - \delta_{v(|r|)}a\sum_{i = 1}^{|r|-1}c_{v(i), v(i+1)} \nonumber\\
&- \sum_{i = 1}^{|r| -1}\delta_{v(i)}\bigg{(}a\sum_{i = 1}^{|r| -1}c_{v(i), v(i+1)} + \pi_{v(i)} - a\sum_{j =i}^{|r| -1}c_{v(j), v(j+1)}\bigg{)}\nonumber\\
\end{align}
\end{small}

\section{The Detailed Integer Solution Results}
\begin{table*}[!htp]
\normalsize{\caption{Optimal Integer Solutions for the SCVRPTWL instances (Part I).}
\label{tab:ip1}}
\begin{center}
\scalebox{0.4}{
\begin{tabular}{cccccccccccc}
\hline
Instance group &   Instance & \# vehicles &   \# splits &         LP &    LP time &        LPC &   LPC time &         IP &    IP time &   \# nodes  &      \# cuts \\
\hline
  R1-25-30 &          1 &         15 &          1 &  15,623.9  &       0.4  &  15,682.2  &       0.7  &  15,682.2  &       0.7  &         1  &        20  \\

           &          2 &         15 &          3 &  15,043.8  &       0.7  &  15,089.9  &       1.2  &  15,223.3  &     660.7  &     3,141  &        98  \\

           &          3 &         13 &          2 &  14,753.6  &       0.6  &  14,781.8  &       1.3  &  14,786.3  &       3.8  &         9  &        98  \\

           &          4 &         13 &          2 &  14,724.9  &       0.7  &  14,760.3  &       1.8  &  14,769.6  &       4.1  &         7  &        72  \\

           &          5 &         14 &          0 &  15,334.8  &       0.4  &  15,383.8  &       0.6  &  15,383.8  &       0.6  &         1  &         8  \\

           &          6 &         13 &          1 &  14,778.5  &       0.5  &  14,804.1  &       1.0  &  14,805.9  &       1.6  &         3  &        26  \\

           &          7 &         13 &          2 &  14,745.3  &       0.8  &  14,781.5  &       1.5  &  14,786.3  &       3.9  &         9  &        64  \\

           &          8 &         13 &          2 &  14,721.0  &       0.8  &  14,760.3  &       2.0  &  14,769.6  &       4.9  &        11  &        53  \\

           &          9 &         13 &          2 &  14,806.8  &       0.5  &  14,830.7  &       1.0  &  14,832.5  &       1.5  &         3  &         8  \\

           &         10 &         13 &          2 &  14,555.4  &       0.6  &  14,594.6  &       1.6  &  14,599.8  &       3.6  &         9  &        64  \\

           &         11 &         13 &          2 &  14,773.2  &       0.7  &  14,801.2  &       1.5  &  14,805.8  &       3.1  &         7  &        38  \\

           &         12 &         13 &          3 &  14,481.0  &       0.9  &  14,547.6  &       2.2  &  14,599.7  &     718.2  &     1,077  &       221  \\
\hline
  R2-25-30 &          1 &         14 &          2 &  15,280.0  &       0.6  &  15,313.0  &       0.8  &  15,326.2  &       7.7  &        33  &        43  \\

           &          2 &         13 &          2 &  14,742.1  &       0.8  &  14,768.8  &       1.8  &  14,774.8  &       5.3  &        11  &        67  \\

           &          3 &         13 &          2 &  14,694.7  &       0.9  &  14,747.8  &       1.9  &  14,755.3  &       5.1  &        11  &        53  \\

           &          4 &         13 &          2 &  14,674.3  &       0.8  &  14,729.0  &       2.0  &  14,738.5  &       5.3  &        11  &        54  \\

           &          5 &         13 &          1 &  14,806.8  &       0.5  &  14,830.7  &       1.0  &  14,832.5  &       1.8  &         3  &         9  \\

           &          6 &         13 &          3 &  14,519.3  &       0.7  &  14,590.7  &       1.6  &  14,636.1  &     301.2  &       381  &       104  \\

           &          7 &         13 &          3 &  14,499.3  &       0.8  &  14,569.9  &       1.9  &  14,616.5  &     494.9  &       525  &       143  \\

           &          8 &         13 &          3 &  14,482.0  &       0.8  &  14,552.7  &       2.2  &  14,599.8  &     697.5  &       795  &       159  \\

           &          9 &         13 &          2 &  14,578.2  &       0.7  &  14,626.7  &       1.6  &  14,642.3  &       4.2  &         5  &       114  \\

           &         10 &         14 &          2 &  14,869.2  &       0.9  &  14,931.5  &       1.7  &  14,986.3  &      94.7  &       245  &        70  \\

           &         11 &         13 &          2 &  14,481.0  &       1.0  &  14,547.6  &       2.2  &  14,599.8  &   1,509.5  &     1,377  &       371  \\
\hline
  C2-25-30 &          1 &         16 &          6 &  19,549.8  &       0.6  &  19,693.0  &       1.3  &  19,693.0  &       2.1  &         4  &        24  \\
\hline
  R1-25-50 &          1 &         11 &          1 &  18,781.6  &       0.4  &  18,794.4  &       0.5  &  18,804.8  &       0.9  &         3  &         1  \\

           &          2 &         10 &          1 &  17,721.3  &       0.7  &  17,731.4  &       0.9  &  17,732.7  &       2.6  &         7  &         1  \\

           &          3 &          9 &          0 &  17,286.4  &       0.8  &  17,286.4  &       0.8  &  17,320.8  &       4.0  &        11  &         3  \\

           &          4 &          8 &          1 &  17,160.4  &       1.0  &  17,169.0  &       1.2  &  17,280.2  &      34.8  &       113  &        10  \\

           &          5 &         11 &          1 &  18,527.6  &       0.4  &  18,527.6  &       0.4  &  18,527.6  &       0.4  &         1  &         0  \\

           &          6 &          9 &          0 &  17,413.9  &       0.6  &  17,413.9  &       0.7  &  17,439.8  &       4.1  &        13  &         1  \\

           &          7 &          9 &          1 &  17,281.6  &       1.0  &  17,281.6  &       1.0  &  17,305.6  &       2.7  &         5  &         0  \\

           &          8 &          9 &          2 &  17,160.4  &       1.0  &  17,169.0  &       1.3  &  17,243.9  &       8.6  &        23  &        11  \\

           &          9 &          9 &          1 &  17,632.7  &       0.7  &  17,736.7  &       1.2  &  17,754.0  &       8.0  &        31  &         5  \\

           &         10 &         10 &          1 &  17,144.0  &       0.8  &  17,234.2  &       1.4  &  17,291.5  &       9.0  &        27  &        14  \\

           &         11 &          9 &          1 &  17,320.8  &       0.8  &  17,320.8  &       0.8  &  17,320.8  &       0.8  &         1  &         0  \\

           &         12 &          9 &          1 &  16,737.3  &       1.1  &  16,778.3  &       1.9  &  16,832.3  &       2.9  &         3  &         5  \\
\hline
  C1-25-50 &          1 &         10 &          0 &  18,501.2  &       0.8  &  18,819.4  &       1.7  &  18,841.8  &       4.3  &         9  &        35  \\

           &          2 &         10 &          1 &  18,333.4  &       0.9  &  18,549.7  &       2.2  &  18,591.8  &     239.8  &       537  &        77  \\

           &          5 &         10 &          3 &  18,501.2  &       0.7  &  18,819.4  &       1.5  &  18,841.8  &      13.1  &        31  &        40  \\

           &          6 &         10 &          4 &  18,501.2  &       0.7  &  18,819.4  &       1.6  &  18,841.8  &       4.5  &         9  &        34  \\

           &          7 &         10 &          1 &  18,393.6  &       0.6  &  18,694.1  &       1.7  &  18,707.8  &      10.5  &        19  &        43  \\

           &          9 &         10 &          2 &  18,119.2  &       1.2  &  18,231.9  &       2.4  &  18,250.2  &      23.4  &        39  &        48  \\
\hline
 RC1-25-50 &          1 &         12 &          1 &  32,263.0  &       0.6  &  32,757.9  &       1.4  &  32,874.0  &     261.1  &     1,523  &       102  \\

           &          2 &         12 &          0 &  31,791.8  &       0.7  &  32,502.3  &       1.4  &  32,508.5  &       2.0  &         3  &        25  \\

           &          3 &         12 &          2 &  31,589.4  &       0.7  &  32,348.2  &       2.0  &  32,408.5  &     410.9  &     1,463  &        73  \\

           &          4 &         12 &          2 &  31,571.1  &       0.8  &  32,321.5  &       2.0  &  32,405.8  &   1,777.1  &     5,547  &       236  \\

           &          5 &         12 &          1 &  31,967.2  &       0.7  &  32,611.2  &       1.6  &  32,778.0  &   3,224.3  &    13,322  &       204  \\

           &          6 &         12 &          1 &  31,821.6  &       0.7  &  32,544.3  &       1.5  &  32,573.3  &      21.8  &        89  &        52  \\

           &          7 &         12 &          1 &  31,435.1  &       0.7  &  32,197.1  &       2.1  &  32,268.8  &     603.6  &     2,449  &       321  \\

           &          8 &         12 &          1 &  31,435.1  &       0.9  &  32,186.6  &       2.4  &  32,268.8  &   2,680.4  &     8,763  &       734  \\
\hline
  R2-25-50 &          1 &         11 &          0 &  18,327.3  &       0.7  &  18,327.3  &       0.7  &  18,327.3  &       0.7  &         1  &         0  \\

           &          2 &          9 &          0 &  17,144.3  &       1.0  &  17,144.3  &       1.0  &  17,144.3  &       1.0  &         1  &         0  \\

           &          3 &          8 &          1 &  17,024.3  &       1.2  &  17,026.5  &       1.6  &  17,056.9  &      14.2  &        25  &         7  \\

           &          4 &          8 &          1 &  16,895.1  &       1.3  &  16,896.4  &       1.6  &  16,927.4  &      54.9  &        73  &         3  \\

           &          5 &          9 &          1 &  17,523.7  &       1.1  &  17,617.9  &       1.6  &  17,681.3  &      12.7  &        33  &         5  \\

           &          6 &          9 &          1 &  16,920.5  &       1.2  &  16,920.5  &       1.2  &  16,920.5  &       1.2  &         1  &         0  \\

           &          7 &          8 &          2 &  16,848.6  &       0.9  &  16,863.3  &       2.0  &  16,863.3  &       2.0  &         1  &         8  \\

           &          8 &          9 &          1 &  16,737.3  &       1.6  &  16,778.3  &       2.2  &  16,832.3  &       3.2  &         3  &        12  \\

           &          9 &          9 &          0 &  17,100.8  &       1.1  &  17,176.7  &       1.7  &  17,333.9  &     416.0  &       771  &        39  \\

           &         10 &         10 &          1 &  17,484.8  &       1.2  &  17,486.2  &       1.4  &  17,486.2  &       1.4  &         1  &         1  \\

           &         11 &          9 &          0 &  16,737.3  &       1.1  &  16,778.3  &       1.7  &  16,832.3  &       3.9  &         5  &        10  \\
\hline
\end{tabular}
}
\end{center}
\end{table*}

\begin{table*}[!htp]
\normalsize{\caption{Optimal Integer Solutions for the SCVRPTWL instances (Part II).}
\label{tab:ip2}}
\begin{center}
\scalebox{0.4}{
\begin{tabular}{cccccccccccc}
\hline
Instance group &   Instance & \# vehicles &   \# splits &         LP &    LP time &        LPC &   LPC time &         IP &    IP time &   \# nodes  &      \# cuts \\
\hline
  C2-25-50 &          1 &         10 &          2 &  20,851.5  &       0.6  &  20,914.0  &       0.8  &  20,914.0  &       0.8  &         1  &         4  \\

           &          2 &         10 &          2 &  20,700.3  &       0.8  &  20,762.8  &       1.1  &  20,762.8  &       1.2  &         1  &         4  \\

           &          3 &         10 &          1 &  20,387.4  &       0.9  &  20,424.3  &       1.3  &  20,424.3  &       1.3  &         1  &        13  \\

           &          4 &         10 &          1 &  20,359.8  &       1.0  &  20,424.3  &       2.0  &  20,424.3  &       2.0  &         1  &        42  \\

           &          5 &         10 &          0 &  20,563.0  &       0.8  &  20,602.5  &       1.1  &  20,602.5  &       1.1  &         1  &        10  \\

           &          6 &         10 &          2 &  20,531.2  &       0.7  &  20,569.5  &       1.4  &  20,569.5  &       1.4  &         1  &        21  \\

           &          7 &         10 &          1 &  20,529.5  &       0.9  &  20,575.5  &       1.6  &  20,575.5  &       1.6  &         1  &        34  \\

           &          8 &         10 &          0 &  20,378.2  &       0.8  &  20,424.3  &       1.5  &  20,424.3  &       1.5  &         1  &        37  \\
\hline
 RC2-25-50 &          1 &         12 &          1 &  32,224.5  &       0.8  &  32,745.9  &       1.6  &  32,820.3  &     179.8  &       567  &        96  \\

           &          2 &         12 &          0 &  31,751.7  &       0.8  &  32,502.3  &       1.7  &  32,508.5  &       4.6  &         9  &        39  \\

           &          3 &         12 &          2 &  31,589.4  &       0.9  &  32,354.8  &       2.3  &  32,408.5  &   1,108.5  &     2,099  &       104  \\

           &          6 &         12 &          1 &  31,883.9  &       1.0  &  32,544.3  &       2.1  &  32,573.3  &      41.7  &        99  &        54  \\

           &          7 &         12 &          1 &  31,490.6  &       0.7  &  32,215.9  &       2.2  &  32,268.8  &     282.5  &       839  &       148  \\
\hline
 R1-25-100 &          1 &         10 &          0 &  27,186.2  &       0.4  &  27,186.2  &       0.4  &  27,186.2  &       0.4  &         1  &         0  \\

           &          2 &          9 &          0 &  25,055.3  &       0.5  &  25,055.3  &       0.5  &  25,187.5  &       1.4  &         3  &         0  \\

           &          3 &          7 &          0 &  23,995.9  &       0.8  &  23,995.9  &       0.8  &  23,995.9  &       0.9  &         1  &         0  \\

           &          4 &          6 &          2 &  23,431.6  &       1.2  &  23,431.6  &       1.2  &  23,431.6  &       1.2  &         1  &         0  \\

           &          5 &          9 &          0 &  26,478.4  &       0.4  &  26,478.4  &       0.4  &  26,532.2  &       1.1  &         3  &         0  \\

           &          6 &          7 &          0 &  24,226.1  &       0.5  &  24,226.1  &       0.6  &  24,226.1  &       0.6  &         1  &         0  \\

           &          7 &          6 &          2 &  23,848.3  &       1.0  &  23,848.3  &       1.0  &  23,848.3  &       1.0  &         1  &         0  \\

           &          8 &          6 &          1 &  23,374.8  &       1.4  &  23,374.8  &       1.4  &  23,374.8  &       1.4  &         1  &         0  \\

           &          9 &          8 &          0 &  24,289.9  &       0.6  &  24,289.9  &       0.6  &  24,289.9  &       0.6  &         1  &         0  \\

           &         10 &          7 &          0 &  23,586.3  &       0.8  &  23,586.3  &       0.8  &  23,684.1  &       2.1  &         5  &         0  \\

           &         11 &          6 &          0 &  23,758.5  &       0.9  &  23,758.5  &       0.9  &  23,758.5  &       0.9  &         1  &         0  \\

           &         12 &          6 &          0 &  23,019.5  &       1.3  &  23,019.5  &       1.3  &  23,019.5  &       1.3  &         1  &         0  \\
\hline
 C1-25-100 &          1 &          5 &          0 &  21,018.1  &       1.0  &  21,036.0  &       1.1  &  21,036.0  &       1.1  &         1  &         4  \\

           &          2 &          5 &          0 &  20,704.3  &       1.6  &  20,713.5  &       1.8  &  20,713.5  &       1.8  &         1  &         3  \\

           &          3 &          5 &          0 &  20,525.3  &       2.1  &  20,645.8  &       3.6  &  20,713.5  &       9.2  &         5  &        16  \\

           &          4 &          5 &          0 &  20,231.0  &       2.9  &  20,334.5  &       7.0  &  20,388.5  &      21.0  &        13  &        17  \\

           &          5 &          5 &          0 &  21,018.1  &       0.9  &  21,036.0  &       1.1  &  21,036.0  &       1.1  &         1  &         4  \\

           &          6 &          5 &          0 &  21,018.1  &       1.1  &  21,036.0  &       1.5  &  21,036.0  &       1.6  &         1  &         4  \\

           &          7 &          5 &          0 &  21,017.2  &       0.9  &  21,036.0  &       1.3  &  21,036.0  &       1.4  &         1  &         6  \\

           &          8 &          5 &          0 &  20,736.3  &       1.5  &  20,944.7  &       2.3  &  20,965.5  &       7.5  &         5  &        13  \\

           &          9 &          5 &          0 &  20,430.0  &       2.0  &  20,490.0  &       2.5  &  20,490.0  &       2.5  &         1  &         6  \\
\hline
RC1-25-100 &          1 &          6 &          0 &  37,073.7  &       0.7  &  37,079.3  &       0.8  &  37,360.5  &       3.0  &         7  &         5  \\

           &          2 &          6 &          0 &  35,816.1  &       1.6  &  36,002.3  &       2.2  &  36,147.5  &      41.9  &        75  &        12  \\

           &          3 &          6 &          0 &  34,626.7  &       1.8  &  34,650.8  &       1.9  &  34,673.5  &       2.8  &         3  &         4  \\

           &          4 &          6 &          2 &  34,510.0  &       1.9  &  34,650.8  &       2.3  &  34,673.5  &       3.5  &         3  &         7  \\

           &          5 &          6 &          1 &  36,360.2  &       1.1  &  36,428.9  &       1.3  &  36,652.0  &       5.2  &         7  &         4  \\

           &          6 &          6 &          0 &  35,808.6  &       1.1  &  35,818.0  &       1.5  &  35,818.0  &       1.5  &         1  &         2  \\

           &          7 &          6 &          0 &  34,349.9  &       1.6  &  34,431.0  &       1.9  &  34,431.0  &       1.9  &         1  &         4  \\

           &          8 &          6 &          1 &  34,052.1  &       2.2  &  34,139.3  &       2.6  &  34,162.0  &       5.3  &         5  &         5  \\
\hline
 R2-25-100 &          1 &          8 &          0 &  25,812.8  &       0.9  &  25,812.8  &       0.9  &  25,812.8  &       0.9  &         1  &         0  \\

           &          2 &          6 &          0 &  23,653.9  &       1.0  &  23,653.9  &       1.0  &  23,663.1  &       3.6  &         5  &         0  \\

           &          3 &          6 &          0 &  23,394.3  &       1.4  &  23,394.3  &       1.4  &  23,486.2  &      39.7  &        37  &         0  \\

           &          4 &          6 &          1 &  23,166.9  &       1.7  &  23,166.9  &       1.7  &  23,166.9  &       1.7  &         1  &         0  \\

           &          5 &          7 &          1 &  23,921.2  &       1.1  &  23,921.2  &       1.1  &  23,921.2  &       1.1  &         1  &         0  \\

           &          6 &          6 &          2 &  23,014.5  &       1.4  &  23,014.5  &       1.4  &  23,014.5  &       1.4  &         1  &         0  \\

           &          7 &          6 &          1 &  22,896.7  &       1.4  &  22,896.7  &       1.4  &  22,896.7  &       1.4  &         1  &         0  \\

           &          8 &          6 &          2 &  22,630.8  &       1.6  &  22,630.8  &       1.6  &  22,630.8  &       1.6  &         1  &         0  \\

           &          9 &          7 &          0 &  23,462.1  &       1.5  &  23,462.1  &       1.5  &  23,462.1  &       1.6  &         1  &         0  \\

           &         10 &          6 &          0 &  23,901.0  &       1.4  &  23,901.0  &       1.4  &  23,907.3  &       3.3  &         3  &         0  \\

           &         11 &          6 &          1 &  22,630.8  &       2.0  &  22,630.8  &       2.0  &  22,630.8  &       2.0  &         1  &         0  \\
\hline
 C2-25-100 &          1 &          7 &          0 &  24,246.7  &       1.3  &  24,325.5  &       2.4  &  24,325.5  &       2.4  &         1  &        29  \\

           &          2 &          6 &          1 &  23,430.3  &       1.8  &  23,485.3  &       2.9  &  23,509.0  &      11.9  &        11  &         4  \\

           &          3 &          6 &          1 &  23,175.7  &       1.8  &  23,217.5  &       2.7  &  23,217.5  &       2.7  &         1  &         4  \\

           &          4 &          5 &          0 &  22,789.1  &       2.4  &  22,898.4  &       2.9  &  23,087.0  &     853.6  &       303  &        31  \\

           &          5 &          7 &          0 &  24,064.8  &       1.5  &  24,156.0  &       3.1  &  24,156.0  &       3.1  &         1  &        44  \\

           &          6 &          6 &          0 &  23,763.3  &       1.8  &  23,977.3  &       3.1  &  24,041.5  &      11.5  &         9  &        36  \\

           &          7 &          6 &          2 &  23,564.2  &       1.7  &  23,787.6  &       2.9  &  23,854.0  &     483.1  &       375  &        58  \\

           &          8 &          6 &          2 &  23,478.3  &       1.7  &  23,702.8  &       3.3  &  23,756.5  &      21.8  &        21  &        38  \\
\hline
\end{tabular}
}
\end{center}
\end{table*}

\begin{table*}[!htp]
\normalsize{\caption{Optimal Integer Solutions for the SCVRPTWL instances (Part III).}
\label{tab:ip3}}
\begin{center}
\scalebox{0.4}{
\begin{tabular}{cccccccccccc}
\hline
Instance group &   Instance & \# vehicles &   \# splits &         LP &    LP time &        LPC &   LPC time &         IP &    IP time &   \# nodes  &      \# cuts \\
\hline
RC2-25-100 &          1 &          6 &          1 &  37,193.7  &       1.4  &  37,239.5  &       1.8  &  37,509.0  &       3.4  &         3  &        13  \\

           &          2 &          6 &          3 &  35,666.7  &       1.7  &  35,967.9  &       2.6  &  36,199.5  &     280.5  &       105  &        20  \\

           &          3 &          6 &          0 &  34,499.0  &       1.9  &  34,502.3  &       2.2  &  34,525.0  &       3.3  &         3  &         1  \\

           &          4 &          6 &          1 &  34,425.9  &       2.0  &  34,502.3  &       2.7  &  34,525.0  &       4.3  &         3  &         7  \\

           &          5 &          6 &          2 &  35,916.0  &       1.9  &  35,965.8  &       2.5  &  36,160.0  &       5.1  &         4  &         6  \\

           &          6 &          6 &          0 &  35,980.0  &       1.5  &  35,980.0  &       1.5  &  35,980.0  &       1.5  &         1  &         0  \\

           &          7 &          6 &          0 &  34,302.9  &       1.5  &  34,395.8  &       1.7  &  34,407.0  &       4.2  &         5  &         1  \\

           &          8 &          6 &          1 &  34,052.1  &       2.3  &  34,139.3  &       2.9  &  34,162.0  &       4.2  &         3  &         2  \\
\hline
  R1-50-50 &          1 &         23 &          0 &  38,938.6  &       0.9  &  39,097.1  &       1.4  &  39,109.5  &       3.7  &         3  &         6  \\

           &          9 &         18 &          5 &  36,422.7  &       2.2  &  36,525.0  &       4.8  &  36,582.8  &     309.9  &       119  &        34  \\
\hline
 RC1-50-50 &          1 &         20 &          5 &  61,701.4  &       1.6  &  61,934.9  &       2.5  &  62,019.3  &     171.5  &       233  &        51  \\

           &          2 &         20 &          5 &  61,251.1  &       1.9  &  61,384.9  &       3.2  &  61,385.5  &      12.3  &        11  &        37  \\

           &          3 &         20 &          8 &  61,038.8  &       2.1  &  61,190.6  &       3.3  &  61,206.8  &      24.5  &        19  &        21  \\

           &          4 &         20 &          9 &  60,852.7  &       2.3  &  60,908.8  &       3.6  &  60,928.8  &   1,252.9  &     1,204  &       159  \\

           &          5 &         20 &          3 &  61,443.5  &       2.1  &  61,620.0  &       3.2  &  61,620.0  &       4.1  &         2  &        20  \\

           &          6 &         20 &          8 &  61,389.3  &       1.7  &  61,489.3  &       2.7  &  61,536.3  &   1,788.4  &     1,559  &       246  \\

           &          7 &         20 &          6 &  60,784.2  &       2.3  &  60,861.2  &       3.7  &  60,880.8  &     971.1  &       993  &        33  \\

           &          8 &         20 &          8 &  60,708.8  &       2.3  &  60,772.2  &       4.2  &  60,791.8  &   1,183.7  &     1,095  &        31  \\
\hline
 RC2-50-50 &          1 &         20 &          3 &  61,697.5  &       1.6  &  61,922.8  &       2.7  &  61,965.5  &     114.2  &        85  &        38  \\

           &          2 &         20 &          5 &  61,251.1  &       2.1  &  61,382.6  &       4.7  &  61,385.5  &      30.4  &        15  &        55  \\

           &          3 &         20 &          5 &  61,038.8  &       2.6  &  61,186.5  &       4.9  &  61,206.8  &      39.7  &        21  &        31  \\

           &          4 &         20 &         11 &  60,852.7  &       2.6  &  60,906.8  &       4.2  &  60,928.8  &   2,884.7  &     1,867  &       166  \\

           &          5 &         20 &          3 &  61,481.0  &       2.6  &  61,694.3  &       4.4  &  61,700.0  &      74.4  &        39  &        71  \\

           &          6 &         20 &          7 &  61,482.9  &       2.7  &  61,580.2  &       3.5  &  61,598.8  &     124.7  &       101  &        21  \\

           &          7 &         20 &          6 &  60,996.3  &       2.0  &  61,104.3  &       3.8  &  61,122.5  &     224.7  &       147  &        29  \\

           &          8 &         20 &          9 &  60,708.8  &       2.7  &  60,772.2  &       4.9  &  60,791.8  &   1,627.6  &     1,032  &        33  \\
\hline
 R1-50-100 &          1 &         18 &          0 &  53,915.2  &       1.0  &  53,915.2  &       1.1  &  53,933.9  &       5.3  &         5  &         0  \\

           &          2 &         14 &          0 &  49,497.6  &       2.2  &  49,497.6  &       2.3  &  49,498.7  &      24.3  &         9  &         0  \\

           &          3 &         12 &          0 &  46,047.4  &       2.7  &  46,047.4  &       2.7  &  46,047.4  &       2.8  &         1  &         0  \\

           &          4 &         10 &          2 &  43,178.8  &       6.3  &  43,183.7  &      10.6  &  43,271.5  &   1,209.6  &       103  &         7  \\

           &          5 &         16 &          1 &  51,958.0  &       1.1  &  51,987.1  &       1.4  &  51,987.1  &       1.4  &         1  &         1  \\

           &          6 &         12 &          1 &  47,578.2  &       2.8  &  47,578.2  &       2.8  &  47,625.0  &     283.5  &        85  &         1  \\

           &          7 &         12 &          0 &  44,839.0  &       2.7  &  44,839.0  &       2.8  &  44,980.5  &      75.6  &        13  &         1  \\

           &          8 &         10 &          1 &  43,059.7  &       6.2  &  43,073.1  &      10.1  &  43,138.6  &     372.3  &        35  &        26  \\

           &          9 &         13 &          0 &  48,334.6  &       3.0  &  48,334.6  &       3.0  &  48,334.6  &       3.0  &         1  &         0  \\

           &         12 &         10 &          0 &  42,412.5  &      11.4  &  42,445.2  &      15.5  &  42,458.6  &     212.2  &        11  &         6  \\
\hline
 C1-50-100 &          1 &         10 &          2 &  41,214.0  &       2.8  &  41,899.5  &       5.6  &  41,928.0  &     103.6  &        21  &        23  \\

           &          2 &         10 &          1 &  40,896.7  &       3.1  &  41,144.0  &       5.3  &  41,144.0  &       5.4  &         1  &         9  \\

           &          4 &         10 &          3 &  39,487.9  &       9.8  &  39,757.5  &      29.2  &  39,843.0  &     341.2  &        11  &        78  \\

           &          5 &         10 &          1 &  41,210.2  &       2.7  &  41,844.4  &       4.9  &  41,927.5  &   1,489.7  &       233  &        15  \\

           &          6 &         10 &          2 &  41,210.8  &       3.2  &  41,898.0  &       6.9  &  41,928.0  &     460.7  &        77  &        22  \\

           &          7 &         10 &          1 &  41,170.9  &       2.4  &  41,799.9  &       6.2  &  41,883.0  &   1,693.3  &       243  &        17  \\

           &          8 &         10 &          1 &  40,844.2  &       3.6  &  41,388.5  &       7.6  &  41,453.5  &   1,439.8  &       167  &        87  \\

           &          9 &         10 &          2 &  39,952.5  &       4.9  &  40,120.8  &      11.3  &  40,159.0  &      74.1  &         7  &        16  \\
\hline
RC1-50-100 &          4 &         10 &          1 &  64,532.7  &       5.9  &  64,836.0  &      11.3  &  64,836.0  &      11.3  &         1  &         5  \\

           &          5 &         11 &          0 &  68,401.8  &       2.7  &  68,737.5  &       6.0  &  69,051.5  &   2,357.4  &       851  &        52  \\

           &          6 &         10 &          0 &  67,988.5  &       3.6  &  68,245.8  &       4.8  &  68,297.0  &      12.4  &         3  &         5  \\

           &          7 &         10 &          5 &  65,191.2  &       4.2  &  65,478.5  &       6.5  &  65,478.5  &       6.6  &         1  &         4  \\

           &          8 &         10 &          2 &  63,864.5  &       5.1  &  64,080.5  &       8.8  &  64,080.5  &       8.8  &         1  &         3  \\
\hline
 R2-50-100 &          1 &         14 &          2 &  50,641.3  &       2.0  &  50,668.8  &       2.6  &  50,710.8  &     204.8  &        53  &         4  \\

           &          2 &         12 &          5 &  47,217.2  &       6.7  &  47,217.2  &       6.7  &  47,268.6  &     622.7  &        97  &         0  \\

           &          4 &         10 &          1 &  42,328.4  &      15.8  &  42,362.4  &      54.9  &  42,426.9  &     291.6  &        17  &        41  \\

           &          5 &         12 &          1 &  47,252.7  &       3.2  &  47,252.7  &       3.2  &  47,279.3  &      49.9  &         7  &         0  \\

           &          6 &         11 &          0 &  45,010.2  &       4.2  &  45,010.2  &       4.2  &  45,013.9  &      50.4  &         3  &         0  \\

           &          7 &         10 &          2 &  43,426.0  &       8.6  &  43,459.2  &      14.6  &  43,569.4  &   2,578.0  &       191  &         4  \\

           &          8 &         10 &          2 &  42,327.9  &      15.7  &  42,362.4  &      33.9  &  42,426.9  &   1,637.5  &        53  &        35  \\

           &         10 &         12 &          2 &  45,272.1  &       9.7  &  45,293.7  &      20.9  &  45,319.1  &     194.3  &        15  &         4  \\

           &         11 &         10 &          0 &  42,390.3  &      17.7  &  42,435.0  &      50.8  &  42,458.6  &     310.9  &        13  &        12  \\
\hline
 C2-50-100 &          6 &         10 &          4 &  45,450.0  &       5.6  &  45,876.3  &      11.1  &  45,888.0  &     153.3  &         5  &        60  \\
\hline
RC2-50-100 &          3 &         10 &          1 &  64,949.3  &       7.7  &  65,317.5  &      17.3  &  65,317.5  &      17.4  &         1  &         6  \\

           &          4 &         10 &          4 &  64,330.8  &       6.1  &  64,561.5  &       9.7  &  64,561.5  &       9.7  &         1  &         3  \\

           &          5 &         10 &          1 &  67,482.7  &       4.4  &  67,822.1  &      11.4  &  68,135.0  &     250.8  &        43  &        23  \\

           &          6 &         10 &          0 &  67,780.8  &       4.8  &  67,971.0  &       9.8  &  67,971.0  &       9.9  &         1  &        10  \\

           &          7 &         10 &          1 &  65,439.3  &       6.4  &  65,682.5  &      11.0  &  65,682.5  &      11.1  &         1  &         8  \\

           &          8 &         10 &          4 &  63,854.6  &       7.0  &  64,080.5  &      14.6  &  64,080.5  &      14.7  &         1  &         3  \\
\hline
R2-100-100 &          1 &         22 &          5 &  84,051.5  &      27.8  &  84,051.5  &      27.9  &  84,051.5  &      28.0  &         1  &         0  \\
\hline
\end{tabular}
}
\end{center}
\end{table*}


\end{document}